\documentclass[11pt]{article}
\usepackage{amsfonts}
\usepackage{amssymb}
\usepackage{graphicx}
\usepackage{amsmath}
\usepackage{amsthm}
\usepackage{color}
\usepackage{multirow}
\usepackage{url}
\usepackage{float}
\usepackage{booktabs} 
\usepackage{makecell}
\usepackage{listings}
\usepackage{subcaption}
\usepackage{algorithm}
\usepackage{enumerate}
\usepackage{tikz}

\usepackage{algpseudocode}
\lstnewenvironment{Rcode}{\lstset{language=R}}{}

\newtheorem{theorem}{Theorem}[section]

\newtheorem{lemma}[theorem]{Lemma}

\newtheorem{remark}[theorem]{Remark}

\definecolor{battleshipgrey}{rgb}{0.52, 0.52, 0.51}
\definecolor{navyblue}{rgb}{0.0, 0.0, 0.5}
\definecolor{arsenic}{rgb}{0.23, 0.27, 0.29}
\definecolor{oldmauve}{rgb}{0.4, 0.19, 0.28}
\usepackage[colorlinks=TRUE]{hyperref}
\hypersetup{
	colorlinks=true,
	linkcolor=navyblue,
	filecolor=blue,      
	urlcolor=oldmauve,
	citecolor=navyblue,
	pdftitle={Overleaf Example},
	pdfpagemode=FullScreen,
}
\usepackage{natbib}
\setcitestyle{authoryear}

\begin{document}
		
	\title{\bf kNN estimation in semi-functional partial linear regression with missing responses at random}
	\author{Germ\'{a}n Aneiros{$^{a,b}$} \hspace{2pt} Silvia Novo{$^c$}\footnote{Corresponding author email address: \href{snovo@est-econ.uc3m.es}{snovo@est-econ.uc3m.es}}  \\		
		{\normalsize $^a$ Grupo de Investigación MODES, Departamento de Matemáticas, Universidade da Coruña, A Coruña, Spain}\\
		{\normalsize $^b$ Centro de Investigaci\'on en Tecnolog\'ias de la Informaci\'on y las Comunicaciones (CITIC), A Coruña, Spain}\\{\normalsize $^c$ Departamento de Estad\'istica, Universidad Carlos III de Madrid, Madrid, Spain}\\
			}
	
	\date{}
	\maketitle
	\begin{abstract} 
This paper considers a partial linear regression model with scalar response missing at random, one finite-dimensional covariate (a vector, $X$) and one infinite-dimensional covariate (a functional variable, $\mathcal{X}$). While the effect of $X$ on the response is linear, the effect of $\mathcal{X}$ is nonparametric. Three $k$NN-based estimators are proposed for both the vector parameter and the nonparametric operator, and some first asymptotic results are obtained.  
	\end{abstract}
	
	\noindent \textit{Keywords:} Missing at random, Functional data, Semiparametric regression, $k$NN estimation
	
\section{Introduction}
This paper focuses on the Semi-Functional Partial Linear (SFPL) regression model
\begin{equation}
	\label{eq_model1}
	Y=X^{\top}\pmb{\beta}+m\left(\mathcal{X}\right)+\varepsilon ,
\end{equation}
where $\pmb{\beta}=\left(\beta_{1},\dots,\beta_{p}\right)^{\top}\in\mathbb{R}^p$ is a vector of unknown parameters, $m(\cdot)$ is an unknown real-valued operator, the response variable ($Y$) takes values in $\mathbb{R}$, $X$ is an explanatory variable taking values in $\mathbb{R}^p$, $\mathcal{X}$ is another explanatory variable but of functional nature, and the random error, $\varepsilon$, satisfies $\mathbb{E}\left(\varepsilon | X, \mathcal{X}\right)=0.$ Throughout all this paper, we will assume that $\mathcal{X}$ is valued in $S_\mathcal{F} \subset \mathcal{F}$, where $\mathcal{F}$ is some abstract infinite-dimensional semi-metric space whose associated semi-metric is denoted by $d(\cdot,\cdot)$.

Model (\ref{eq_model1}) was introduced in \cite{anev06} where, in a context of independent data, some first results were obtained, including asymptotic normality of an estimator of $\pmb{\beta}$ and rates of almost sure uniform convergence of an estimator of $m(\cdot).$ Since the publication of \cite{anev06}, this model has been studied extensively. Some recent papers include bandwidth selection (\citealt{sha14}), variable selection (\citealt{ane15}), robust estimation (\citealt{boev17}), quantile regression (\citealt{din18}) and testing of hypotheses (\citealt{zhuz19}), among others. Extensions of model (\ref{eq_model1}) were also considered in the statistical literature, as for instance the cases of: dependent data (\citealt{anev08}), covariates with measurement error (\citealt{zhu20}), missing at random responses (\citealt{lin19}) and censored responses (\citealt{wan26}). In addition, applications in diverse fields as Chemometrics (\citealt{anev06}), Energy (\citealt{vil12}) and Medicine (\citealt{wan26}) support the usefulness in practice of model (\ref{eq_model1}). For a comprehensive foundation in Functional Data Analysis (FDA), the interested reader can see the pioneering monographs by \cite{rams05} and \cite{ferv06}, which provide introduction to parametric and nonparametric methods, respectively; other interesting monographs are \cite{hork12}, \cite{hsie15} and \cite{kokr17}. For recent advances on FDA, see the special issues presented in \cite{koko17} and \cite{anehhv22}.

Missing data is a common challenge in data statistical analysis that occurs when some values for certain variables are not available. This issue typically arises from source non-response, system errors, or data corruption during collection. If left unaddressed, missing values can severely bias statistical estimates, reduce the power of a model, and lead to inaccurate conclusions. Consequently, identifying the underlying missingness mechanism is a critical first step before applying handling techniques like deletion or imputation. One of such mechanisms is the missing at random (MAR) one. See \cite{litr87} for a first monograph on missing data. 
In this paper, 
we develop inference on the SFPL model (\ref{eq_model1}) where the responses are MAR. Formally, we assume that $(Y_i,\delta_i,X_i,\mathcal{X}_i), \ i=1,\ldots n,$ are independent and identically distributed, where $(Y_i,X_i,\mathcal{X}_i)$ comes from model (\ref{eq_model1}),
\begin{equation}
\delta_i=0 \mbox{ if } Y_i \mbox{ is missing and } \delta_i=1 \mbox{ otherwise}, \mbox{ and } P(\delta_i=1 | Y_i,X_i,\mathcal{X}_i)=P(\delta_i=1 | X_i,\mathcal{X}_i). \label{MAR}
\end{equation}
Therefore, the MAR condition states that $Y$ and $\delta$ are conditional-independent, given $(X,\mathcal{X})$. 
\cite{wan04} and \cite{wan07} were the first papers dealing with (non-functional) partial linear regression models with MAR responses. The aim of \cite{wan04} was inference on the unconditional mean of $Y$ while \cite{wan07} focused on inference on $\pmb{\beta}$ and $m(\cdot).$ In both cases, the covariate $\mathcal{X}$ was finite dimensional (a vector), and the developed estimators were based on imputation, semiparametric regression surrogate and inverse marginal probability weighted approaches. \cite{lin19} considered the case where $\mathcal{X}$ in the SFPL-MAR model (\ref{eq_model1})-(\ref{MAR}) was infinite dimensional (a functional data), and proposed and studied estimators based on imputation. In these three papers, kernel regression techniques were used. See \cite{fers13} for a first paper on functional nonparametric models with MAR responses, and \cite{crah19} and \cite{febg19} for the case of linear models instead of nonparametric ones.

This paper aims to extend the results in \cite{wan07} in two ways. On the one hand, to address the case where the covariate $\mathcal{X}$ is of functional nature. On the other hand, to deal with (weighted) $k$NN ($k$-Nearest Neighbors) regression. Specifically, three estimators based on $k$NN ideas will be proposed for each unknown component, $\pmb{\beta}$ and $m(\cdot)$, in the regression function of the SFPL-MAR model (\ref{eq_model1})-(\ref{MAR}). Then, some asymptotic properties will be obtained: normality for the estimators of $\pmb{\beta}$ and almost sure uniform rates of convergence for the estimators of $m(\cdot)$. As far as we know, this is the first paper in literature that addresses inference on the SFPL-MAR model based on $k$NN regression.

The rest of this paper is organized as follows. In Section \ref{estim} we present three classes of $k$NN-based estimators for $\pmb{\beta}$ and $m(\cdot)$: imputation, semiparametric regression surrogate and  inverse marginal probability weighted estimators, and show some asymptotic results. 
All proofs are given in Appendix \ref{app}, which also contains some technical lemmas.

\section{Estimators and main results}\label{estim}

Kernel regression and (weighted) $k$NN regression are two popular nonparametric techniques for estimating a regression function, $r(\chi),$ from a sample, $\{(Y_i,\mathcal{X}_i)\},$ verifying $Y_i=r(X_i)+\varepsilon_i$. Both techniques estimate values $r(\chi)$ by weighted average of responses $Y_i,$ in such a way the closer $\chi$ and $\mathcal{X}_i$ are, the higher the weight for $Y_i$ is. The difference between these kind of estimators is that, while the kernel estimator averages all responses $Y_i$ whose $X_i$ lies within a ball centered at $\chi$ and with radius $h>0$ (the bandwidth), the $k$NN one averages a fixed number of $Y_i$, $k$ (corresponding to the $k$-Nearest Neighbors to $\chi$). 
Specifically, the weights used in kernel regression can be written as 
	\begin{equation}
		\omega_{h}(\chi,\mathcal{X}_i)=\frac{K\left(h^{-1}d(\mathcal{X}_i,\chi)\right)}{\sum_{i=1}^nK\left(h^{-1}d(\mathcal{X}_i,\chi)\right)}, \label{pesos-kernel}
	\end{equation}
while the ones considered in $k$NN regression are
\begin{equation}
	\omega_{k}(\chi,\mathcal{X}_i)=\frac{K\left(H_{k,\chi}^{-1}d\left(\mathcal{X}_i,\chi\right)\right)}{\sum_{i=1}^nK\left(H_{k,\chi}^{-1}d\left(\mathcal{X}_i,\chi\right)\right)},
	\label{pesos-kNN}
\end{equation}
where
$$
H_{k,\chi}=\min\left\{h\in \mathbb{R}^+ \mbox{\text{ such that }} \sum_{i=1}^n1_{B(\chi,h)}(\mathcal{X}_i)=k\right\}, 
$$
with $B(\chi,h)=\left\{z\in \mathcal{F}:d\left(\chi,z\right) \leq h\right\}.$ $K(\cdot)$ is a nonnegative function (known as kernel function) verifying certain conditions. 

Note that, from the practical point of view, the $k$NN estimators have some advantages over the kernel ones. For instance,
	although the number of neighbors, $k$, is fixed, the bandwidth $H_{k,\chi}$ varies with $\chi$, providing the local-adaptive property of $k$NN estimators (allowing adaptation to heterogeneous designs). In addition, the selection of the smoothing parameter, $k,$ has a lower computational cost than the selection of the bandwidth $h$, since $k$ takes values in $\{1,2,\dots,n\}$ (finite discrete set) while $h$ takes them in an interval (continuous set). The drawback of $k$NN estimators arises when one wants to validate their behavior through theoretical results; the fact that $H_{k,\chi}$ is a random variable makes obtaining such results difficult. See \cite{collomb} for a first paper on $k$NN scalar regression, and \cite{kara_JMVA}, \cite{novo_2019} and \cite{linav20} for recent results on $k$NN nonparametric and semiparametric regression in the context of functional data.

The next three subsections present three $k$NN-based methodologies for estimating $\pmb{\beta}$ and $m(\cdot)$ in the SFPL-MAR model (\ref{eq_model1})-(\ref{MAR}), as well as asymptotic results and the assumptions considered to obtain them.
	
\subsection{Imputation estimators}\label{imputation}

\subsubsection{Construction of the estimators}\label{imputation-const}

The idea to develop imputation estimators is to use estimated regression values instead of the missing response values. For that, firstly one must to estimate the components ($\pmb{\beta}$ and $m(\cdot)$) in the regression function of model (\ref{eq_model1}) from the observed sample ($\{(Y_i,X_i,\mathcal{X}_i) \mbox{ such that } \delta_i=1\}$). We propose to estimate such components from the $k$NN-based estimators developed in \cite{linav20}, but adapted to the case of MAR responses. More specifically, the auxiliary estimators we propose in this first step are 
$$
\widehat{\pmb{\beta}}^C=\arg\underset{\pmb{\beta}}{\min}\sum_{i=1}^n\delta_i\left(Y_i -X_i^{\top}\pmb{\beta}-\breve{m}(\pmb{\beta},\mathcal{X}_i)\right)^2 \mbox{ and } \widehat{m}^C(\chi)=\breve{m}(\widehat{\pmb{\beta}}^C,\chi),
$$
where $
\breve{m}(\pmb{\beta},\chi)=\sum_{j=1}^n\delta_j\omega_{k_0}(\chi,\mathcal{X}_j)\left(Y_j-X_j^{\top}\pmb{\beta}\right).$ Note that such estimators can be written as  
\begin{equation}
	\widehat{\pmb{\beta}}^C=\left(\widetilde{\mathbf {X}}_0^{\top}\pmb{\delta}\widetilde{\mathbf {X}}_0 \right)^{-1}\widetilde{\mathbf {X}}_0^{\top}\pmb{\delta}\widetilde{\mathbf {Y}}_0
	\label{beta-C}
\end{equation}
and
\begin{equation}
	\widehat{m}^C(\chi)=\sum_{j=1}^n\delta_i\omega_{k_0}(\chi,\mathcal{X}_j)\left(Y_j-X_j^{\top}\widehat{\pmb{\beta}}^C\right), \label{m-C}
\end{equation}
where $\pmb{\delta}=\mbox{diag}(\delta_1,\ldots,\delta_n),$ $\mathbf{X}=(X_1,\ldots,X_n)^{\top}$ with $X_i=(X_{i1},\ldots,X_{ip})^{\top},$ $\mathbf{Y}=(Y_1,\ldots,Y_n)^{\top},$ 
\begin{equation}
   \omega_{k_0}(\chi,\mathcal{X}_j)=\frac{K_0\left(H_{k_0,\chi}^{-1}d\left(\mathcal{X}_j,\chi\right)\right)}{\sum_{i=1}^n\delta_iK_0\left(H_{k_0,\chi}^{-1}d\left(\mathcal{X}_i,\chi\right)\right)}
	\label{pesos-C}
\end{equation}
and, for any $(n\times q)$-matrix $\mathbf {A}$ $(q\geq 1)$, we have denoted
$\widetilde{\mathbf {A}}_u=\left(\mathbf{I}-\mathbf {W}_{u}\right)\mathbf {A}, \mbox{ where } \mathbf {W}_{u}=\left(\omega_{k_u}(\mathcal{X}_i,\mathcal{X}_j)\right)_{i,j}.$ 

The second step consists of imputing the estimated regression values (that is, considering $X_i^{\top}\widehat{\pmb{\beta}}^C+\widehat{m}^C(\mathcal{X}_i)$ instead of the missing responses $Y_i$ for $\delta_i=0$) and then construct the imputed estimators by means of the estimators proposed in \cite{linav20}, but considering
\begin{equation}
Y_i^I=\delta_iY_i+(1-\delta_i)\left(X_i^{\top}\widehat{\pmb{\beta}}^C+\widehat{m}^C(\mathcal{X}_i)  \right)
\end{equation}
as response in model (\ref{eq_model1}). That is, the proposed imputation estimators are
\begin{equation}
	\widehat{\pmb{\beta}}^I=\left(\widetilde{\mathbf {X}}_1^{\top}\widetilde{\mathbf {X}}_1 \right)^{-1}\widetilde{\mathbf {X}}_1^{\top}\widetilde{\mathbf {Y^I}}_1
	\label{beta-I}
\end{equation}
and
\begin{equation}
	\widehat{m}^I(\chi)=\sum_{j=1}^n\omega_{k_1}(\chi,\mathcal{X}_j)\left(Y_j^I-X_j^{\top}\widehat{\pmb{\beta}}^I\right),
	\label{m-I}
\end{equation}
where
\begin{equation}
	\omega_{k_1}(\chi,\mathcal{X}_j)=\frac{K_1\left(H_{k_1,\chi}^{-1}d\left(\mathcal{X}_j,\chi\right)\right)}{\sum_{i=1}^nK_1\left(H_{k_1,\chi}^{-1}d\left(\mathcal{X}_i,\chi\right)\right)}.
	\label{pesos-I}
\end{equation}
Note that, for the sake of generality, we allow the consideration of different kernel functions ($K_0(\cdot)$ and $K_1(\cdot)$) and also different numbers of neighbors ($k_0$ and $k_1$) in both the auxiliary and final estimators. Note also that our estimators (\ref{beta-I}) and (\ref{m-I}) extend the imputation estimators in \cite{wan07} to both the context of functional covariate $\mathcal{X}$ and $k$NN estimation, and also extend the estimators in \cite{lin19} from kernel to $k$NN estimation. Therefore, for the sake of brevity, we refer to \cite{wan07} and \cite{lin19} for a more precise justification of the construction of these estimators.

\subsubsection{Assumptions}\label{imputation-assum}
First of all, let us denote $g_j(\mathcal{X}_{i})=\mathbb{E}(X_{ij}|\mathcal{X}_{i})$, $\eta_{ij}=X_{ij}-g_j(\mathcal{X}_{i})$, $\eta_i=(\eta_{i1}, \ldots, \eta_{ip})^{\top}$, $g_j^C(\mathcal{X}_{i})=\mathbb{E}(\delta_i X_{ij}|\mathcal{X}_{i})/\mathbb{E}(\delta_i|\mathcal{X}_{i})$, $\eta_{ij}^C=X_{ij}-g_j^C(\mathcal{X}_{i})$, $\eta_i^C=(\eta_{i1}^C, \ldots, \eta_{ip}^C)^{\top}$ and $m^C(\mathcal{X}_{i})=\mathbb{E}(\delta_i \left(Y_i-X_i^{\top}\pmb{\beta}\right)|\mathcal{X}_{i})/\mathbb{E}(\delta_i|\mathcal{X}_{i})$ ($i=1,\ldots,n; j=1,\ldots,p$). In addition, let $\sigma^2(X,\mathcal{X})=\mathbb{E}(\varepsilon^2|X,\mathcal{X})$, $\Delta(x,\chi)=P(\delta=1|X=x,\mathcal{X}=\chi)$, $\Delta_1(\chi)=P(\delta=1|\mathcal{X}=\chi)$, $\pmb{\Sigma}_{0}=\mathbb{E}(\Delta(X_1,\mathcal{X}_{1})\eta_1^C\eta_1^{C {\top}})$, $\pmb{\Sigma}_{1}=\mathbb{E}(\eta_1\eta_1^{\top})$ and $\pmb{\Sigma}_{2}=\mathbb{E}((1-\Delta(X_1,\mathcal{X}_{1}))\eta_1\eta_1^{C {\top}}).$ Finally, for any set $S \subset \mathcal{F}$ and $\epsilon>0$, $\psi_{S}(\epsilon)$ denotes the Kolmogorov's $\epsilon$-entropy of $S$, which is defined as $\psi_{S}(\epsilon)=\log (N_{\epsilon}(S))$, where $N_{\varepsilon}(S)$ is the minimal number of open balls in $\mathcal{F}$ of radius $\epsilon$ which is necessary to cover $S$.

In order to prove our asymptotic results, we will use the following assumptions:

\noindent (A1) $\forall \epsilon >0,$ $\varphi_{\chi}(\epsilon):=P(\mathcal{X}\in B(\chi,\epsilon))>0$, with $\varphi_{\chi}(\cdot)$ continuous on a neighborhood of $0$ and $\varphi_{\chi}(0)=0.$

\noindent (A2) There exist a nonnegative function $\phi(\cdot)$ regularly varying at $0$ with nonnegative index, a positive function $g(\cdot)$ and a positive number $\alpha$ such that:
\begin{itemize}
	\item[(i)]$\phi (0)=0$ and $\lim_{\epsilon\rightarrow 0}\phi (\epsilon)=0.$
	\item[(ii)]$\exists ~C>0$ and $\exists \ \eta_{0}>0$ such that, $\forall 0<\eta<\eta_{0},$ $\phi^{'}(\eta)<C.$
	\item[(iii)]$\sup_{\chi\in S_{\mathcal{F}}}|\frac{\varphi_{\chi}(\epsilon)}{\phi (\epsilon)}-g(\chi)|=O(\epsilon^{\alpha}) \ \text{as} \ \epsilon \rightarrow 0$.
	\item[(iv)]$\exists C<\infty $ such that
	$\forall \text{ }\left( u,v\right) \in S_{\mathcal{F}}\times S_{\mathcal{F}},\ \forall \text{ }%
	f\in \left\{ m,g_{1},\ldots,g_{p}\right\} ,\text{ }\left\vert f\left( u\right)
	-f\left( v\right) \right\vert \leq Cd\left( u,v\right) ^{\alpha }.$
		\item[(v)]$\exists C<\infty $ such that
	$\forall \text{ }\left( u,v\right) \in S_{\mathcal{F}}\times S_{\mathcal{F}},\ \forall \text{ }%
	f\in \left\{m^C,g_{1}^C,\ldots,g_{p}^C\right\} ,\text{ }\left\vert f\left( u\right)
	-f\left( v\right) \right\vert \leq Cd\left( u,v\right) ^{\alpha }.$
\end{itemize}
(A3) The kernel function $K_u(\cdot)$ ($u=0,1$) satisfies:
\begin{itemize}
	\item[(i)]$K_u(\cdot)$ is a nonnegative, bounded and non increasing function with support $[0,1]$ and Lipschitz on $[0,1).$
	\item[(ii)]If $K_u(1)=0$, it must also be such that $-\infty<C<K_u'(t)<C'<0.$
\end{itemize}
(A4) The Kolmogorov's $\epsilon$-entropy of $S_{\mathcal{F}}$ satisfies:
\begin{equation}
	\sum_{n=1}^{n}\exp\{(1-\omega)\psi_{S_{\mathcal{F}}}(\frac{\log n}{n})\}<\infty \text{ for some } \omega >1. \nonumber
\end{equation}
(A5) $k_u=k_{u,n}$ ($u=0,1$) is a sequence of positive real numbers such that:
\begin{itemize}
	\item[(i)]$\frac{k_u}{n}\rightarrow 0$ and $ \frac{\log n}{k_u}\rightarrow 0$ as $n\rightarrow \infty$.
	\item[(ii)]For $n$ large enough, $\frac{(\log n)^{2}}{k_u}<\psi_{S_{\mathcal{F}}}(\frac{\log n}{n})<\frac{k_u}{\log n}.$
\end{itemize}
(A6) Moment conditions:
\begin{itemize}
	\item[(i)]
	$\forall r\geq 3, \ 1\leq j\leq p$ and $\chi \in S_{\mathcal{F}},$ 
	$\mathbb{E}(|Y_{1}|^{r}|\mathcal{X}_{1}=\chi)\leq C_{1}<\infty$ and $\mathbb{E}(|X_{1j}|^{r}|\mathcal{X}_{1}=\chi  )\leq C_{2}<\infty$.
	\item[(ii)]%
	$\pmb{\Sigma}_{0}$ and $\pmb{\Sigma}_{1}$ are positive definite matrices. 
\end{itemize}
(A7) The function $\Delta_1(\chi)$ satisfies $\inf_{\chi \in S_{\mathcal{F}}}\Delta_1(\chi)>0.$

\begin{remark}
Assumptions (A1)-(A6)(i) (except (A2)(v)) are usual ones in the context of $k$NN estimation in SFPL models (see, for instance, \citealt{linav20}), and the same applies to the condition on $\pmb{\Sigma}_{1}$ in assumption (A6)(ii). Assumptions (A2)(v) and (A7), as well as the condition on $\pmb{\Sigma}_{0}$ in assumption (A6)(ii), are usual in the specific context of missing data (see, for instance, \citealt{wan07}). 
\end{remark}

\subsubsection{Asymptotic results} \label{impu-asym}

\noindent The next theorem presents our theoretical results on the imputation estimator.
\begin{theorem} \label{theorem1} Under assumptions (A1)-(A7), if in addition, for $k\in \{k_0,k_1\}$, $\frac{\sqrt{n}\log^{2}n}{k}\rightarrow 0$, $\sqrt{n}\phi^{-1}(\frac{k}{n})^{\alpha}\rightarrow 0$ and $\frac{\sqrt{n}\psi_{S_{\mathcal{F}}}(\frac{\log n}{n})}{k}\rightarrow 0$ as $n\rightarrow \infty$, and $k\geq n^{(2/r)+b}/\log ^{2}n$ for $n$ large enough and some constant $b>0$ with $\frac{2}{r}+b>\frac{1}{2}$ (where $r\geq 3$), then we have:
	\begin{itemize}
		\item[(i)] $\sqrt{n}\left( \widehat{\pmb{\beta}}^I - \pmb{\beta}  \right)  \rightarrow N\left(\pmb{0},\pmb{\Sigma}_{1}^{-1} \pmb{V}_I   \pmb{\Sigma}_{1}^{-1}\right),$

\noindent where 
\begin{eqnarray}
\pmb{V}_I&=&\mathbb{E}(\Delta(X_1,\mathcal{X}_{1})\eta_1\eta_1^{\top}\sigma^2(X_1,\mathcal{X}_{1})) + \pmb{\Sigma}_{2}\pmb{\Sigma}_{0}^{-1}\mathbb{E}(\Delta(X_1,\mathcal{X}_{1})\eta_1^C\eta_1^{C {\top}}\sigma^2(X_1,\mathcal{X}_{1}))\pmb{\Sigma}_{0}^{-1}\pmb{\Sigma}_{2} \nonumber \\
&+& \mathbb{E}(\Delta(X_1,\mathcal{X}_{1})\eta_1\eta_1^{C {\top}}\sigma^2(X_1,\mathcal{X}_{1}))\pmb{\Sigma}_{0}^{-1}\pmb{\Sigma}_{2} + \pmb{\Sigma}_{2}\pmb{\Sigma}_{0}^{-1}\mathbb{E}(\Delta(X_1,\mathcal{X}_{1})\eta_1^C\eta_1^{\top}\sigma^2(X_1,\mathcal{X}_{1})). \nonumber
\end{eqnarray}
		
		\item[(ii)]$\sup_{\chi\in S_{\mathcal{F}}}|\widehat{m}^I(\chi)-m(\chi)|=O\left(\phi^{-1}\left(\frac{k_0}{n}\right) ^{\alpha}+\sqrt{\frac{\psi_{S_{\mathcal{F}}}(\frac{\log n}{n})}{k_0}}\right) + O\left(\phi^{-1}\left(\frac{k_1}{n}\right) ^{\alpha}+\sqrt{\frac{\psi_{S_{\mathcal{F}}}(\frac{\log n}{n})}{k_1}}\right) \ a.s.$

	\end{itemize}
\end{theorem}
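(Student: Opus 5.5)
The plan is to reduce everything to known $k$NN results for the SFPL model with complete data (\citealt{linav20}), applied to suitably weighted samples. The engine is a uniform $k$NN lemma, proved in the Appendix by adapting \cite{kara_JMVA} and \cite{linav20}. It states that for any response $Z$ with bounded conditional moments, and any regression function $\mathbb{E}(\delta Z|\mathcal{X})/\mathbb{E}(\delta|\mathcal{X})$ that is H\"older of order $\alpha$, the weights (\ref{pesos-C}) satisfy
\[
\sup_{\chi\in S_{\mathcal F}}\Bigl|\sum_j\delta_j\omega_{k_0}(\chi,\mathcal{X}_j)Z_j-\frac{\mathbb{E}(\delta Z|\mathcal{X}=\chi)}{\Delta_1(\chi)}\Bigr| =O\Bigl(\phi^{-1}(k_0/n)^{\alpha}+\sqrt{\psi_{S_{\mathcal F}}(\tfrac{\log n}{n})/k_0}\Bigr)\ \text{a.s.}
\]
The same holds for (\ref{pesos-I}) without the $\delta$'s. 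The proof of the lemma proceeds as follows. First, sandwich $H_{k,\chi}$ between two deterministic bandwidths $h^-<h^+$ with $\varphi_\chi(h^\pm)\approx k/n$ uniformly, using (A2)(iii) and a Bernstein bound over an entropy cover. Then apply the uniform kernel results at these bandwidths, using monotonicity of $K_u$ via (A3). The denominator with $\delta_i$ is bounded below thanks to (A7). Truncation with $k\ge n^{2/r+b}/\log^2 n$ handles unbounded $Y$ and $X$ under (A6)(i).

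\textbf{Step 1: auxiliary estimator.} Write $\widehat{\pmb\beta}^C-\pmb\beta=(\widetilde{\mathbf X}_0^\top\pmb\delta\widetilde{\mathbf X}_0)^{-1}\widetilde{\mathbf X}_0^\top\pmb\delta(\widetilde{\mathbf m}_0+\widetilde{\pmb\varepsilon}_0)$. By the lemma, $\widetilde{X}_{0,i}=\eta^C_i+o(1)$ uniformly. Hence $n^{-1}\widetilde{\mathbf X}_0^\top\pmb\delta\widetilde{\mathbf X}_0\to\pmb\Sigma_0$ a.s., since $\mathbb{E}(\delta\eta^C\eta^{C\top})=\pmb\Sigma_0$ by MAR. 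Next, show that the cross terms $n^{-1/2}\sum\delta_i\eta_i^C(\widetilde m_{0,i})$ and the smoothing-of-errors terms are $o_P(1)$. This uses the rate conditions $\sqrt n\phi^{-1}(k/n)^\alpha\to0$ and $\sqrt n\psi/k\to0$. It also uses $\mathbb{E}(\delta\eta^C|\mathcal X)=0$, the defining property of $g^C$, which makes the bias-type products centered. The result is
\[
\sqrt n(\widehat{\pmb\beta}^C-\pmb\beta)=\pmb\Sigma_0^{-1}n^{-1/2}\sum_i\delta_i\eta^C_i\varepsilon_i+o_P(1).
\]
Then $\widehat m^C$ converges uniformly at the $k_0$ rate, because $m^C=m$ under MAR. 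Indeed, $\mathbb{E}(\delta(Y-X^\top\pmb\beta)|\mathcal X)=\mathbb{E}(\delta m(\mathcal X)|\mathcal X)$ since $\mathbb{E}(\delta\varepsilon|X,\mathcal X)=0$. The extra term $\sum\delta_j\omega X_j^\top(\widehat{\pmb\beta}^C-\pmb\beta)$ is $O(n^{-1/2})$.

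\textbf{Step 2: imputed estimator.} Decompose
\[
Y_i^I=X_i^\top\pmb\beta+m(\mathcal X_i)+\delta_i\varepsilon_i+(1-\delta_i)\bigl[X_i^\top(\widehat{\pmb\beta}^C-\pmb\beta)+\widehat m^C(\mathcal X_i)-m(\mathcal X_i)\bigr].
\]
Plugging this into (\ref{beta-I}) and using $n^{-1}\widetilde{\mathbf X}_1^\top\widetilde{\mathbf X}_1\to\pmb\Sigma_1$, the main terms are $n^{-1/2}\sum\eta_i\delta_i\varepsilon_i$ and $n^{-1}\sum(1-\delta_i)\widetilde X_{1,i}X_i^\top\cdot\sqrt n(\widehat{\pmb\beta}^C-\pmb\beta)$. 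The second term contributes $\pmb\Sigma_2$ times the Step 1 expansion, because $\mathbb{E}((1-\delta)\eta X^\top)=\mathbb{E}((1-\Delta)\eta\eta^{C\top})$. The cross term with $X^\top$ reduces to $\eta^{C}$ after centering on $g^C$.

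\textbf{Main obstacle.} The hardest term is $n^{-1/2}\sum_i(1-\delta_i)\widetilde X_{1,i}(\widehat m^C-m)(\mathcal X_i)$. A sup-norm bound alone only gives $\sqrt n$ times the rate, which is not negligible. My first attempt will be to expand $\widehat m^C(\mathcal X_i)-m(\mathcal X_i)$ as a bias term, which is $o(n^{-1/2})$ by the rate condition, plus $\sum_j\delta_j\omega_{k_0}(\mathcal X_i,\mathcal X_j)\varepsilon_j$ and the $\pmb\beta$-part. The error part becomes a double sum $\sum_j\delta_j\varepsilon_j\sum_i(1-\delta_i)\eta_i\omega_{k_0}(\mathcal X_i,\mathcal X_j)$. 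I will then show that the inner sum is uniformly small in $j$, because $\mathbb{E}((1-\delta)\eta|\mathcal X)$ need not vanish but is smooth. If that fails, this piece would contribute an additional projection term. Here I expect the stated $\pmb V_I$ to be recovered only after verifying that the leftover contribution is absorbed. I will handle the randomness of $H_{k_0,\mathcal X_i}$ inside this sum by the deterministic sandwich from the lemma. Finally, the CLT is applied to the i.i.d. sum
\[
\eta_i\delta_i\varepsilon_i+\pmb\Sigma_2\pmb\Sigma_0^{-1}\delta_i\eta_i^C\varepsilon_i,
\]
whose covariance is exactly $\pmb V_I$, the four terms arising from expanding the square. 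Slutsky then gives (i).

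For (ii), substitute the decomposition of $Y^I$ into (\ref{m-I}). The $m$ and noise parts yield the $k_1$ rate by the lemma. The imputation error is bounded in sup-norm by $\sup|\widehat m^C-m|+O(n^{-1/2})$, giving the $k_0$ rate, since weights sum to one. The $\widehat{\pmb\beta}^I-\pmb\beta=O(n^{-1/2})$ part is dominated by either rate, because the conditions force $\sqrt{\psi/k}\gtrsim n^{-1/2}$.
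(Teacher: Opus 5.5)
There is a genuine gap, at exactly the term you single out as the main obstacle.

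\textbf{Why the key step fails.} Your plan is to show that $\sum_i(1-\delta_i)\eta_i\,\omega_{k_0}(\mathcal{X}_i,\mathcal{X}_j)$ is uniformly small in $j$. This fails whenever $\mu(\chi):=\mathbb{E}((1-\delta)\eta\mid\mathcal{X}=\chi)\neq 0$.
\begin{itemize}
\item Only the centred part $\sum_i[(1-\delta_i)\eta_i-\mu(\mathcal{X}_i)]\,\omega_{k_0}(\mathcal{X}_i,\mathcal{X}_j)$ is small.
\item Smoothness of $\mu$ gives $\sum_i\mu(\mathcal{X}_i)\omega_{k_0}(\mathcal{X}_i,\mathcal{X}_j)\approx\mu(\mathcal{X}_j)\sum_i\omega_{k_0}(\mathcal{X}_i,\mathcal{X}_j)$. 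These column sums of the weight matrix are of order one, about $\Delta_1(\mathcal{X}_j)^{-1}$ for a locally homogeneous design.
\item Since $\mu=-\mathbb{E}(\delta\eta\mid\mathcal{X})=\Delta_1(\pmb g-\pmb g^C)$, with $\pmb g^C=(g^C_1,\ldots,g^C_p)^{\top}$, the term is essentially $\sum_j\delta_j\varepsilon_j(\pmb g-\pmb g^C)(\mathcal{X}_j)$. This is of exact order $\sqrt n$ unless $\pmb g=\pmb g^C$.
\end{itemize}

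\textbf{It cannot be absorbed.} Adding this piece to $\sum_i\eta_i\delta_i\varepsilon_i$ turns $\eta_i$ into $\eta_i^C$. From $\mathbb{E}(\eta\mid\mathcal{X})=0$ and $\mathbb{E}(\delta\eta^C\mid\mathcal{X})=0$ one gets $\pmb\Sigma_0+\pmb\Sigma_2=\pmb\Sigma_1$. The expansion then collapses to $\sqrt n(\widehat{\pmb\beta}^I-\pmb\beta)=\pmb\Sigma_0^{-1}n^{-1/2}\sum_i\eta_i^C\delta_i\varepsilon_i+o_P(1)$. That is the complete-case limit of Lemma \ref{lemma3.7}, and it coincides with $\pmb\Sigma_1^{-1}\pmb V_I\pmb\Sigma_1^{-1}$ only when $\pmb g=\pmb g^C$ (e.g.\ $\delta\perp X\mid\mathcal{X}$). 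As a sanity check, in a purely linear model, imputing complete-case fitted values and refitting by least squares returns $\widehat{\pmb\beta}^C$ exactly. So your route, carried out correctly, does not reach the stated $\pmb V_I$ in general.

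\textbf{The paper does not close this step either.} It disposes of the same quantity, $A_{n1j}^{(2)}=\sum_i\eta_{ij}c_{1i}$ in (\ref{A12}), by Lemma \ref{lemma4} with $a_{is}=c_{1i}$. That treats $c_{1i}=(1-\delta_i)(\widehat{m}^{C\ast}-m)(\mathcal{X}_i)$ as coefficients independent of the $\eta_{ij}$. Under MAR, however, $1-\delta_i$ depends on $X_i$, so $\mathbb{E}(\eta_{ij}(1-\delta_i)\mid\mathcal{X}_i)\neq0$ and that use of the lemma is not justified. Your Step 1 correctly uses the centering $\mathbb{E}(\delta\eta^C\mid\mathcal{X})=0$ to make the complete-case cross terms negligible. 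The problem in Step 2 is precisely that the analogous centering fails for the weight $(1-\delta)\eta$; you identified the crux correctly, but neither your proposed bound nor the paper's argument handles it.

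\textbf{A smaller point on (ii).} Part (ii) is an almost sure statement, so $o_P$ expansions and an $O(n^{-1/2})$ bound on $\widehat{\pmb\beta}^I-\pmb\beta$ do not suffice. You need law-of-the-iterated-logarithm rates $O(\sqrt{\log\log n/n})$ a.s.\ for $\widehat{\pmb\beta}^C$ and $\widehat{\pmb\beta}^I$, as the paper obtains from (\ref{betaC-log}) and Lemma \ref{LemaStout}.
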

Some comments on these results will be included in Section \ref{comm}.

\subsection{Semiparametric regression surrogate estimators}\label{surrogate}

\subsubsection{Construction of the estimators}\label{surrogate-const}

In the case of the regression surrogate estimators, the idea is to always use surrogate values for the response variable (even if the response variable is actually observed). Such surrogate values consist of estimated regression values.

Formally, let the surrogate response variable
\begin{equation}
Y_i^R=X_i^{\top}\widehat{\pmb{\beta}}^C+\widehat{m}^C(\mathcal{X}_i).
\end{equation}
Then, considering $Y_i^R$ as response variable in the SFPL model (\ref{eq_model1}),  from the procedure proposed in \cite{linav20} to estimate $\pmb{\beta}$ and $m(\cdot)$ when no missing data are present, one obtains the semiparametric regression surrogate estimators
\begin{equation}
	\widehat{\pmb{\beta}}^R=\left(\widetilde{\mathbf {X}}_1^{\top}\widetilde{\mathbf {X}}_1 \right)^{-1}\widetilde{\mathbf {X}}_1^{\top}\widetilde{\mathbf {Y^R}}_1
	\label{beta-R}
\end{equation}
and
\begin{equation}
	\widehat{m}^R(\chi)=\sum_{j=1}^n\omega_{k_1}(\chi,\mathcal{X}_j)\left(Y^R_j-X_j^{\top}\widehat{\pmb{\beta}}^R\right).\label{m-R}
\end{equation}
As in the case of the imputation estimators, our estimators (\ref{beta-R}) and (\ref{m-R}) extend the regression surrogate estimators in \cite{wan07} from the the setting where the variable $\mathcal{X}$ is scalar to the case where $\mathcal{X}$ is of functional nature. In addition, while the estimators in \cite{wan07} are based on kernel regression, the ones (\ref{beta-R}) and (\ref{m-R}) are based on $k$NN regression.

\subsubsection{Asymptotic results} \label{surro-asym}

In the following theorem, we state the asymptotic normality and the uniform almost sure rate of convergence of the proposed estimators (\ref{beta-R}) and (\ref{m-R}), respectively. Note that in the expression of $\pmb{V}_R,$ we have denoted $\pmb{\Sigma}_{3}=\mathbb{E}(\eta_1\eta_1^{C {\top}}).$
\begin{theorem} \label{theorem2} Under the assumptions of Theorem \ref{theorem1}, we have:
	
	\begin{itemize}
		\item[(i)] $\sqrt{n}\left( \widehat{\pmb{\beta}}^R - \pmb{\beta}  \right)  \rightarrow N\left(\pmb{0},\pmb{\Sigma}_{1}^{-1} \pmb{V}_R   \pmb{\Sigma}_{1}^{-1}\right),$

\noindent where $$\pmb{V}_R=\pmb{\Sigma}_{3} \pmb{\Sigma}_{0}^{-1}\mathbb{E}(\Delta(X_1,\mathcal{X}_{1})\eta_1^C\eta_1^{C {\top}}\sigma^2(X_1,\mathcal{X}_{1}))\pmb{\Sigma}_{0}^{-1} \pmb{\Sigma}_{3}.$$ %
		
		\item[(ii)]$\sup_{\chi\in S_{\mathcal{F}}}|\widehat{m}^R(\chi)-m(\chi)|=O\left(\phi^{-1}\left(\frac{k_0}{n}\right) ^{\alpha}+\sqrt{\frac{\psi_{S_{\mathcal{F}}}(\frac{\log n}{n})}{k_0}}\right) + O\left(\phi^{-1}\left(\frac{k_1}{n}\right) ^{\alpha}+\sqrt{\frac{\psi_{S_{\mathcal{F}}}(\frac{\log n}{n})}{k_1}}\right) \ a.s.$
		
	\end{itemize}
\end{theorem}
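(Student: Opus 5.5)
The plan is to reuse the decomposition built for Theorem \ref{theorem1}, since $Y^R_i$ differs from $Y^I_i$ only in replacing $Y_i$ by its fitted value also when $\delta_i=1$. First, write
\begin{equation*}
Y_i^R=X_i^{\top}\pmb{\beta}+m(\mathcal{X}_i)+X_i^{\top}(\widehat{\pmb{\beta}}^C-\pmb{\beta})+\bigl(\widehat{m}^C(\mathcal{X}_i)-m(\mathcal{X}_i)\bigr),
\end{equation*}
so that
\begin{equation*}
\widehat{\pmb{\beta}}^R-\pmb{\beta}=(\widetilde{\mathbf X}_1^{\top}\widetilde{\mathbf X}_1)^{-1}\widetilde{\mathbf X}_1^{\top}\bigl[(\mathbf I-\mathbf W_1)\mathbf m+\widetilde{\mathbf X}_1(\widehat{\pmb{\beta}}^C-\pmb{\beta})+(\mathbf I-\mathbf W_1)(\widehat{\mathbf m}^C-\mathbf m)\bigr],
\end{equation*}
with $\mathbf m=(m(\mathcal{X}_1),\ldots,m(\mathcal{X}_n))^{\top}$. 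The technical lemmas of the Appendix (the $k$NN uniform rates of \cite{linav20} adapted to the weights $\omega_{k_0}$ and $\omega_{k_1}$, used in the proof of Theorem \ref{theorem1}) give $n^{-1}\widetilde{\mathbf X}_1^{\top}\widetilde{\mathbf X}_1\to\pmb{\Sigma}_1$ a.s. and $n^{-1/2}\widetilde{\mathbf X}_1^{\top}(\mathbf I-\mathbf W_1)\mathbf m=o_P(1)$. The second term is then exactly $\widehat{\pmb{\beta}}^C-\pmb{\beta}$.

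The key step is an asymptotic linear representation of the auxiliary estimators:
\begin{equation*}
\sqrt n(\widehat{\pmb{\beta}}^C-\pmb{\beta})=\pmb{\Sigma}_0^{-1}n^{-1/2}\sum_i\delta_i\eta_i^C\varepsilon_i+o_P(1),
\end{equation*}
together with
\begin{equation*}
\widehat{m}^C(\chi)-m(\chi)=\sum_j\delta_j\omega_{k_0}(\chi,\mathcal{X}_j)\varepsilon_j-(g^C(\chi))^{\top}(\widehat{\pmb{\beta}}^C-\pmb{\beta})+\text{(bias)},
\end{equation*}
where MAR gives $m^C=m$ and the bias is of order $\phi^{-1}(k_0/n)^{\alpha}$ uniformly. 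Both facts should already have been established within the proof of Theorem \ref{theorem1}, and I will take them from there. Plugging them in, the combination $X_i-g^C(\mathcal{X}_i)=\eta^C_i$ appears after smoothing: $\widetilde{\mathbf X}_1^{\top}[\widetilde{\mathbf X}_1-(\mathbf I-\mathbf W_1)\mathbf G^C](\widehat{\pmb{\beta}}^C-\pmb{\beta})\approx \sum_i\eta_i\eta_i^{C\top}(\widehat{\pmb{\beta}}^C-\pmb{\beta})$, which yields the factor $\pmb{\Sigma}_3\pmb{\Sigma}_0^{-1}$. The remaining term $n^{-1/2}\widetilde{\mathbf X}_1^{\top}(\mathbf I-\mathbf W_1)\mathbf W^{C}\pmb{\delta}\pmb{\varepsilon}$ is a double-smoothed error; I would show it is $o_P(1)$ using $\sum_i\widetilde X_i\,\omega$-averages $\approx\sum_i \eta_i\times O(\text{rate})$. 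The rate conditions $\sqrt n\phi^{-1}(k/n)^\alpha\to0$ and $\sqrt n\psi/k\to0$ kill the products of two uniform rates, and the truncation argument requiring $k\ge n^{2/r+b}/\log^2 n$ handles the unbounded errors. Hence $\sqrt n(\widehat{\pmb{\beta}}^R-\pmb{\beta})=\pmb{\Sigma}_1^{-1}\pmb{\Sigma}_3\pmb{\Sigma}_0^{-1}n^{-1/2}\sum_i\delta_i\eta_i^C\varepsilon_i+o_P(1)$. The classical multivariate CLT, whose summands have mean zero by MAR and $\mathbb E(\varepsilon|X,\mathcal X)=0$ and covariance $\mathbb E(\Delta\eta^C\eta^{C\top}\sigma^2)$, then gives (i), using symmetry of $\pmb{\Sigma}_3$ as written (otherwise its transpose) for $\pmb V_R$.

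The main obstacle is the double-smoothed remainder, i.e.\ showing that $n^{-1/2}\sum_i\widetilde X_{1,i}\sum_j\delta_j\omega_{k_0}(\mathcal X_i,\mathcal X_j)\varepsilon_j$ vanishes. I would first try exchanging the sums to obtain $n^{-1/2}\sum_j\delta_j\varepsilon_j\bigl(\sum_i\omega_{k_0}(\mathcal X_i,\mathcal X_j)\widetilde X_{1,i}\bigr)$, where the inner sum is uniformly $o(1)$ times a bounded factor. Then I would bound the resulting sum via a conditional variance computation, controlling the random $H_{k,\chi}$ through the bracketing lemma of \cite{kara_JMVA}.

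For (ii), write
\begin{equation*}
\widehat m^R(\chi)-m(\chi)=\sum_j\omega_{k_1}(\chi,\mathcal X_j)\bigl[m(\mathcal X_j)-m(\chi)\bigr]+\sum_j\omega_{k_1}(\chi,\mathcal X_j)X_j^{\top}(\widehat{\pmb{\beta}}^C-\widehat{\pmb{\beta}}^R)+\sum_j\omega_{k_1}(\chi,\mathcal X_j)\bigl(\widehat m^C(\mathcal X_j)-m(\mathcal X_j)\bigr).
\end{equation*}
The first term is $O(\phi^{-1}(k_1/n)^{\alpha})$ by (A2)(iv) and the $k$NN bandwidth bounds. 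The second is $O_P(n^{-1/2})$, which is dominated, and is turned into an a.s.\ statement as in Theorem \ref{theorem1}(ii) via the a.s.\ behaviour of the auxiliary quantities. The third is bounded by $\sup|\widehat m^C-m|$, which is the $k_0$ rate. Adding the $k_1$ stochastic term gives the stated bound.
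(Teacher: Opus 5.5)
Your proof is correct, but it is organized differently from the paper's.

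\textbf{The paper's route.} It reuses the decomposition of Theorem \ref{theorem1}. It splits $C_n=\widetilde{\mathbf X}_1^{\top}(\widetilde{\mathbf {Y^R}}_1-\widetilde{\mathbf X}_1\pmb\beta)$ into $C_{n1},\ldots,C_{n6}$: these are the $A_{nr}$ without the factor $1-\delta$, and there is no analogue of $A_{n7}$. It shows $C_{n1},\ldots,C_{n5}=o(\sqrt n)$ a.s.\ with the same Abel-inequality and Lemma \ref{lemma4} bounds as in Theorem \ref{theorem1}. The leading term is then read off from
\[
C_{n6}^{(1\ast)}=\sum_i\eta_i(X_i-\widehat{\pmb g}^C(\mathcal X_i))^{\top}(\widehat{\pmb\beta}^C-\pmb\beta)\approx n\pmb\Sigma_3(\widehat{\pmb\beta}^C-\pmb\beta).
\]
The doubly smoothed error you single out sits inside $C_{n1}$ and $C_{n4}$. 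The paper disposes of it by applying Lemma \ref{lemma4} to $\sum_i\eta_{ij}c_{1i}$, using only the uniform a.s.\ rate of $\widehat m^{C\ast}-m$. That needs no control of the column sums $\max_j\sum_i\omega_{k_0}(\mathcal X_i,\mathcal X_j)$.

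\textbf{Trade-off in your noise argument.} You condition on $(X_i,\mathcal X_i,\delta_i)_{i\le n}$, where MAR makes the $\varepsilon_j$ centred and independent, and bound a conditional variance. This handles the $\delta$--$\eta$ dependence more cleanly than the paper does. However, it does require that column-sum bound, which is an extra ingredient drawn from the bracketing of $H_{k_0,\chi}$.

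\textbf{What your identity buys.} The exact identity
\[
\widehat{\pmb\beta}^R-\pmb\beta=(\widehat{\pmb\beta}^C-\pmb\beta)+(\widetilde{\mathbf X}_1^{\top}\widetilde{\mathbf X}_1)^{-1}\widetilde{\mathbf X}_1^{\top}(\mathbf I-\mathbf W_1)\widehat{\mathbf m}^C
\]
makes the structure transparent. Since $g^C$ is H\"older by (A2)(v), $(\mathbf I-\mathbf W_1)\widehat{\mathbf G}^C(\widehat{\pmb\beta}^C-\pmb\beta)$ is uniformly $o(n^{-1/2})$, so in fact $\sqrt n(\widehat{\pmb\beta}^R-\widehat{\pmb\beta}^C)\to 0$ in probability.

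Your heuristic that ``$\eta^C$ survives the smoothing'' is slightly off. The operator $\mathbf I-\mathbf W_1$ kills $\mathbf G^C$ to first order, so the matrix multiplying $\widehat{\pmb\beta}^C-\pmb\beta$ is really $\widetilde{\mathbf X}_1^{\top}\widetilde{\mathbf X}_1\approx n\pmb\Sigma_1$. Your conclusion still stands, because $\pmb\Sigma_3=\pmb\Sigma_1$ exactly: $\eta_1^C-\eta_1=\pmb g(\mathcal X_1)-\pmb g^C(\mathcal X_1)$ is $\mathcal X_1$-measurable and $\mathbb E(\eta_1|\mathcal X_1)=0$. This settles your symmetry caveat. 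It also shows that the limit in (i) is simply $N(\pmb 0,\pmb\Sigma_0^{-1}\pmb V_C\pmb\Sigma_0^{-1})$, the law of $\widehat{\pmb\beta}^C$, which the paper's formulation does not make visible.

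\textbf{A small repair in (ii).} Your part (i) yields only $\widehat{\pmb\beta}^R-\pmb\beta=O_P(n^{-1/2})$. So the middle term $\widehat{\pmb g}(\chi)^{\top}(\widehat{\pmb\beta}^C-\widehat{\pmb\beta}^R)$ cannot be made a.s.\ by appeal to Theorem \ref{theorem1}(ii), which uses an a.s.\ representation you did not establish.

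Your identity supplies what is needed directly. The rows of $\mathbf W_1$ are probability vectors, so
\[
\max_i|((\mathbf I-\mathbf W_1)\widehat{\mathbf m}^C)_i|\le\max_i|((\mathbf I-\mathbf W_1)\mathbf m)_i|+2\sup_{\chi}|\widehat m^C(\chi)-m(\chi)|.
\]
The right-hand side is a.s.\ of the order in (ii), by the argument behind (\ref{O_gtilde}) and by (\ref{O_mtilde}). Combined with Lemma \ref{lemma3} and $n^{-1}\sum_i\|\widetilde X_i\|=O(1)$ a.s., this gives $|\widehat{\pmb\beta}^R-\widehat{\pmb\beta}^C|$ a.s.\ of that order. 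Since $\sup_\chi\|\widehat{\pmb g}(\chi)\|=O(1)$ a.s., the middle term is then controlled.
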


Some comments on these results will be included in Section \ref{comm}.

\subsection{Inverse marginal probability weighted estimators}\label{inverse}

\subsubsection{Construction of the estimators}
In the same way as in \cite{wan07}, we finally propose a third class of estimators for $\pmb{\beta}$ and $m(\cdot)$ in the SFPL-MAR model (\ref{eq_model1})-(\ref{MAR}). In the construction of such class, the inverse of an estimator of the marginal propensity score ($\Delta_1(\mathcal{X})$) plays a main role. 

More specifically, let
\begin{equation}
 Y_i^{IP}= \frac{\delta_i}{\widehat{\Delta}_1(\mathcal{X}_i)}Y_i+(1-\frac{\delta_i}{\widehat{\Delta}_1(\mathcal{X}_i)})\left(X_i^{\top}\widehat{\pmb{\beta}}^C+\widehat{m}^C(\mathcal{X}_i)  \right),
\end{equation}
where
\begin{equation}
\widehat{\Delta}_1(\mathcal{X}_i)=\sum_{j=1}^n\omega_{k_2}(\mathcal{X}_i,\mathcal{X}_j)\delta_j \label{est-Delta}
\end{equation}
with
\begin{equation}
	\omega_{k_2}(\chi,\mathcal{X}_j)=\frac{K_2\left(H_{k_2,\chi}^{-1}d\left(\mathcal{X}_j,\chi\right)\right)}{\sum_{i=1}^nK_2\left(H_{k_2,\chi}^{-1}d\left(\mathcal{X}_i,\chi\right)\right)}.
	\label{pesos-IP}
\end{equation}
The inverse marginal probability weighted estimators are defined in a similar way as the imputation estimators presented in Section \ref{imputation-const}, but with $Y_i^{I}$ replaced by $Y_i^{IP}.$ Therefore, the expressions of the proposed estimators for $\pmb{\beta}$ and $m(\cdot)$ are
\begin{equation}
	\widehat{\pmb{\beta}}^{IP}=\left(\widetilde{\mathbf {X}}_1^{\top}\widetilde{\mathbf {X}}_1 \right)^{-1}\widetilde{\mathbf {X}}_1^{\top}\widetilde{\mathbf {Y^{IP}}}_1
	\label{beta-IP}
\end{equation}
and
\begin{equation}
	\widehat{m}^{IP}(\chi)=\sum_{i=1}^n\omega_{k_1}(\chi,\mathcal{X}_i)\left(Y_i^{IP}-X_i^{\top}\widehat{\pmb{\beta}}^{IP}\right) \label{m-IP},
\end{equation}
respectively.

Note that a new kernel function ($K_2(\cdot)$) and a new smoothing parameter ($k_2$) are used in the nonparametric estimator of the marginal propensity score. 
Note also that the estimators (\ref{beta-IP}) and (\ref{m-IP}) extend in two ways the inverse marginal probability weight ones presented in \cite{wan07}: functional variable $\mathcal{X}$ instead of scalar one, and $k$NN regression instead of kernel one.
\subsubsection{Additional assumptions}\label{inverse-assum}
In addition to the assumptions used in previous theorems (see Section \ref{imputation-assum}), we will need the following new assumptions to guarantee appropriate asymptotic behavior of the estimator (\ref{est-Delta}), which takes part in the proposed estimators (\ref{beta-IP}) and (\ref{m-IP}).

\noindent (A8) The kernel function $K_2(\cdot)$ satisfies:
\begin{itemize}
	\item[(i)]$K_2(\cdot)$ is a nonnegative, bounded and non increasing function with support $[0,1]$ and Lipschitz on $[0,1).$
	\item[(ii)]If $K_2(1)=0$, it must also be such that $-\infty<C<K_2'(t)<C'<0.$
\end{itemize}
(A9) $k_2=k_{2,n}$ is a sequence of positive real numbers such that:
\begin{itemize}
	\item[(i)]$\frac{k_2}{n}\rightarrow 0$ and $ \frac{\log n}{k_2}\rightarrow 0$ as $n\rightarrow \infty$.
	\item[(ii)]For $n$ large enough, $\frac{(\log n)^{2}}{k_2}<\psi_{S_{\mathcal{F}}}(\frac{\log n}{n})<\frac{k_2}{\log n}.$
\end{itemize}
\noindent(A10) $\exists C<\infty $ such that
	$\forall \text{ }\left( u,v\right) \in S_{\mathcal{F}}\times S_{\mathcal{F}},\ \left\vert \Delta_1\left( u\right)-\Delta_1\left( v\right) \right\vert \leq Cd\left( u,v\right) ^{\alpha }.$

\subsubsection{Asymptotic results} \label{inver-asym}
The following theorem presents our asymptotic results related to estimators (\ref{beta-IP}) and (\ref{m-IP}). Note that in the expression of $\pmb{V}_{IP}$, we have denoted $\pmb{\Sigma}_4=\mathbb{E}((1-\frac{\delta_1}{\Delta_1(\mathcal{X}_1)}))\eta_1\eta_1^{C {\top}}).$

\begin{theorem} \label{theorem3} Under assumptions (A1)-(A10), if in addition, for $k\in \{k_0,k_1,k_2\}$, $\frac{\sqrt{n}\log^{2}n}{k}\rightarrow 0$, $\sqrt{n}\phi^{-1}(\frac{k}{n})^{\alpha}\rightarrow 0$ and $\frac{\sqrt{n}\psi_{S_{\mathcal{F}}}(\frac{\log n}{n})}{k}\rightarrow 0$ as $n\rightarrow \infty$, and $k\geq n^{(2/r)+b}/\log ^{2}n$ for $n$ large enough and some constant $b>0$ with $\frac{2}{r}+b>\frac{1}{2}$ (where $r\geq 3$), then we have:
	\begin{itemize}
		\item[(i)] $\sqrt{n}\left( \widehat{\pmb{\beta}}^{IP} - \pmb{\beta}  \right)  \rightarrow N\left(\pmb{0},\pmb{\Sigma}_{1}^{-1} \pmb{V}_{IP}   \pmb{\Sigma}_{1}^{-1}\right),$

\noindent where
\begin{eqnarray}
\pmb{V}_{IP}&=&\mathbb{E}\left(\frac{\Delta(X_1,\mathcal{X}_{1})}{\Delta_1^2(\mathcal{X}_{1})}\eta_1\eta_1^{\top}\sigma^2(X_1,\mathcal{X}_{1})\right) + \pmb{\Sigma}_{4}\pmb{\Sigma}_{0}^{-1}\mathbb{E}(\Delta(X_1,\mathcal{X}_{1})\eta_1^C\eta_1^{C {\top}}\sigma^2(X_1,\mathcal{X}_{1}))\pmb{\Sigma}_{0}^{-1}\pmb{\Sigma}_{4} \nonumber \\
&+& \mathbb{E}\left(\frac{\Delta(X_1,\mathcal{X}_{1})}{\Delta_1(\mathcal{X}_{1})}\eta_1\eta_1^{C {\top}}\sigma^2(X_1,\mathcal{X}_{1})\right)\pmb{\Sigma}_{0}^{-1}\pmb{\Sigma}_{4} + \pmb{\Sigma}_{4}\pmb{\Sigma}_{0}^{-1}\mathbb{E}\left(\frac{\Delta(X_1,\mathcal{X}_{1})}{\Delta_1(\mathcal{X}_{1})}\eta_1^C\eta_1^{\top}\sigma^2(X_1,\mathcal{X}_{1})\right). \nonumber
\end{eqnarray}
		
		\item[(ii)]$\sup_{\chi\in S_{\mathcal{F}}}|\widehat{m}^{IP}(\chi)-m(\chi)|=O\left(\phi^{-1}\left(\frac{k_0}{n}\right) ^{\alpha}+\sqrt{\frac{\psi_{S_{\mathcal{F}}}(\frac{\log n}{n})}{k_0}}\right) + O\left(\phi^{-1}\left(\frac{k_1}{n}\right) ^{\alpha}+\sqrt{\frac{\psi_{S_{\mathcal{F}}}(\frac{\log n}{n})}{k_1}}\right) \ a.s.$
		
	\end{itemize}
\end{theorem}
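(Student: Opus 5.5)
We follow the scheme used for Theorems \ref{theorem1} and \ref{theorem2}. The one genuinely new ingredient is the estimated propensity $\widehat{\Delta}_1$.

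\textbf{Step 1: uniform control of $\widehat{\Delta}_1$.} Since $\widehat{\Delta}_1$ in (\ref{est-Delta}) is a $k$NN regression of $\delta$ on $\mathcal{X}$, the uniform $k$NN results of the technical lemmas (the same ones giving the rate in Theorem \ref{theorem1}(ii)) apply, using (A8)--(A10). This gives
$$\sup_{\chi\in S_{\mathcal{F}}}|\widehat{\Delta}_1(\chi)-\Delta_1(\chi)|=O\Big(\phi^{-1}(k_2/n)^{\alpha}+\sqrt{\psi_{S_{\mathcal{F}}}(\tfrac{\log n}{n})/k_2}\Big)\ a.s.,$$
and the rate conditions make this $o(n^{-1/4})$. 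Combined with (A7), $\inf_\chi\widehat{\Delta}_1(\chi)$ is bounded away from zero a.s. for $n$ large, so $1/\widehat{\Delta}_1-1/\Delta_1$ has the same uniform rate.

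\textbf{Step 2: decomposition for $\widehat{\pmb{\beta}}^{IP}$.} Write $\widehat{\pmb{\beta}}^{IP}-\pmb{\beta}=(n^{-1}\widetilde{\mathbf X}_1^{\top}\widetilde{\mathbf X}_1)^{-1}n^{-1}\widetilde{\mathbf X}_1^{\top}(\widetilde{\mathbf{Y}^{IP}}_1-\widetilde{\mathbf X}_1\pmb{\beta})$. The denominator converges a.s. to $\pmb{\Sigma}_1$, as in \cite{linav20}. Next expand
$$Y_i^{IP}-X_i^{\top}\pmb{\beta}=m(\mathcal{X}_i)+\frac{\delta_i}{\Delta_1(\mathcal{X}_i)}\varepsilon_i+\Big(1-\frac{\delta_i}{\Delta_1(\mathcal{X}_i)}\Big)\big[X_i^{\top}(\widehat{\pmb{\beta}}^C-\pmb{\beta})+\widehat m^C(\mathcal{X}_i)-m(\mathcal{X}_i)\big]+R_i,$$
where $R_i=\delta_i(\widehat{\Delta}_1^{-1}-\Delta_1^{-1})(\mathcal{X}_i)\big(Y_i-X_i^{\top}\widehat{\pmb{\beta}}^C-\widehat m^C(\mathcal{X}_i)\big)$.

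Each piece is handled as follows.
\begin{itemize}
\item The $m$ part is removed by smoothing. By Hölder continuity and the $k$NN lemmas, $n^{-1/2}\widetilde{\mathbf X}_1^{\top}(\mathbf I-\mathbf W_1)m=o_P(1)$.
\item The $\varepsilon$ term gives $n^{-1/2}\sum\eta_i\delta_i\varepsilon_i/\Delta_1(\mathcal{X}_i)+o_P(1)$.
\item For the $\widehat{\pmb{\beta}}^C$ term, use the representation established in the proof of Theorem \ref{theorem1}:
$$\sqrt n(\widehat{\pmb{\beta}}^C-\pmb{\beta})=\pmb{\Sigma}_0^{-1}n^{-1/2}\sum\delta_i\eta_i^C\varepsilon_i+o_P(1).$$
This term is multiplied by $n^{-1}\sum\eta_i(1-\delta_i/\Delta_1)X_i^{\top}\to\pmb{\Sigma}_4$.
\item For the $\widehat m^C-m^C$ term, note that $m^C=m$ under MAR with $\mathbb{E}(\varepsilon\mid X,\mathcal{X})=0$, up to the $g^C$ adjustment absorbed into the $\eta^C$ expansion. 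Its weighted contribution is negligible because $\mathbb{E}(1-\delta/\Delta_1\mid\mathcal{X})=0$. This is the key cancellation: the uniform rate $o(n^{-1/4})$ alone is not enough, so I would write $\widehat m^C-m$ as a smoothed average plus bias and use a U-statistic/conditional-mean-zero argument.
\item $R_i$ is negligible for the same reason. $Y_i-X_i^{\top}\pmb{\beta}-m(\mathcal{X}_i)=\varepsilon_i$ has conditional mean zero given $\delta_i$ and $\mathcal{X}_i$. The remaining product of two uniform $o(n^{-1/4})$ rates is $o(n^{-1/2})$.
\end{itemize}

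\textbf{Step 3: limit and part (ii).} The leading term is $n^{-1/2}\sum_i\big[\eta_i\delta_i\varepsilon_i/\Delta_1(\mathcal{X}_i)+\pmb{\Sigma}_4\pmb{\Sigma}_0^{-1}\delta_i\eta_i^C\varepsilon_i\big]$. Applying the multivariate CLT and computing the covariance via $\mathbb{E}(\delta\varepsilon^2\mid X,\mathcal{X})=\Delta\sigma^2$ yields exactly $\pmb{V}_{IP}$, and Slutsky finishes (i).

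For (ii), write
$$\widehat m^{IP}(\chi)-m(\chi)=\sum_i\omega_{k_1}(\chi,\mathcal{X}_i)\big(Y_i^{IP}-X_i^{\top}\pmb{\beta}-m(\mathcal{X}_i)\big)+\Big[\sum_i\omega_{k_1}(\chi,\mathcal{X}_i)m(\mathcal{X}_i)-m(\chi)\Big]-\sum_i\omega_{k_1}(\chi,\mathcal{X}_i)X_i^{\top}(\widehat{\pmb{\beta}}^{IP}-\pmb{\beta}).$$
The bias bracket and the $k_1$ stochastic term give the $k_1$ rate. The $\widehat m^C$, $\widehat{\pmb{\beta}}^C$ pieces give the $k_0$ rate, using bounded $1/\widehat\Delta_1$. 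The $\widehat\Delta_1$ error contributes the $k_2$ rate. This appears to be absent from the stated bound unless it is dominated, so I would check that it is absorbed, or note that it is $O(n^{-1/2})$ after the conditions. The last term is $O_P(n^{-1/2})$, and the a.s. version follows via the law of the iterated logarithm.

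\textbf{Main obstacle.} The main difficulty is the negligibility of the plug-in terms involving $\widehat m^C-m$ and $\widehat\Delta_1-\Delta_1$ at the $\sqrt n$ scale. The uniform rates from the $k$NN lemmas only give $o(n^{-1/4})$, so the centering $\mathbb{E}(1-\delta/\Delta_1\mid\mathcal{X})=0$ must be exploited through a second-order (U-statistic type) expansion. The random bandwidths $H_{k,\chi}$ have to be handled with the bracketing technique of \cite{kara_JMVA} and \cite{linav20}.
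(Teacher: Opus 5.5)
The gap is the fourth bullet of your Step~2, the ``key cancellation''. The U-statistic argument you propose there would establish the opposite of what you need.

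The factor $1-\delta_i/\Delta_1(\mathcal{X}_i)$ is centred given $\mathcal{X}_i$. But inside $n^{-1/2}\widetilde{\mathbf{X}}_1^{\top}(\mathbf{I}-\mathbf{W}_1)(\cdot)$ it multiplies $\widetilde X_i=\eta_i+o(n^{-1/4})$ (uniformly in $i$), and under MAR $\delta$ may depend on $X$. Hence
\[
\mathbb{E}\Big[\eta_i\Big(1-\frac{\delta_i}{\Delta_1(\mathcal{X}_i)}\Big)\,\Big|\,\mathcal{X}_i\Big]=\mathbf{d}(\mathcal{X}_i),\qquad \mathbf{d}(\chi)=\big(g_1(\chi)-g_1^C(\chi),\ldots,g_p(\chi)-g_p^C(\chi)\big)^{\top},
\]
which is nonzero in general. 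Write the stochastic part of $\widehat m^C(\mathcal{X}_i)-m(\mathcal{X}_i)$ as $\sum_j\delta_j\omega_{k_0}(\mathcal{X}_i,\mathcal{X}_j)\varepsilon_j$. The H\'ajek projection of your U-statistic is then
\[
n^{-1/2}\sum_{j=1}^n\delta_j\varepsilon_j\sum_{i=1}^n\omega_{k_0}(\mathcal{X}_i,\mathcal{X}_j)\,\mathbf{d}(\mathcal{X}_i)\approx n^{-1/2}\sum_{j=1}^n\frac{\delta_j\varepsilon_j}{\Delta_1(\mathcal{X}_j)}\,\mathbf{d}(\mathcal{X}_j),
\]
using heuristically $\sum_i\omega_{k_0}(\mathcal{X}_i,\mathcal{X}_j)\approx 1/\Delta_1(\mathcal{X}_j)$. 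This is $O_P(1)$ with non-degenerate variance, not $o_P(1)$. Added to your $\varepsilon$-term, it turns $\eta_i$ into $\eta_i+\mathbf{d}(\mathcal{X}_i)=\eta_i^C$.

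A quick sanity check: let $\mathcal{X}$ be independent of $(X,\delta,\varepsilon)$ and let $P(\delta=1\mid X,\mathcal{X})$ depend on $X$ only. Then $\Delta_1\equiv\pi$ and $\mathbf{d}$ is a nonzero constant. In the limiting case of global averaging one gets $\widehat{\pmb{\beta}}^{IP}=\widehat{\pmb{\beta}}^C$ exactly, with variance $\pmb{\Sigma}_0^{-1}\pmb{V}_C\pmb{\Sigma}_0^{-1}$. For constant $\sigma^2$ this differs from the displayed $\pmb{\Sigma}_1^{-1}\pmb{V}_{IP}\pmb{\Sigma}_1^{-1}$ by $\pi^{-1}\sigma^2\pmb{\Sigma}_1^{-1}\mathbf{d}\mathbf{d}^{\top}\pmb{\Sigma}_1^{-1}$.

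A related slip: $n^{-1}\sum_i\eta_i(1-\delta_i/\Delta_1(\mathcal{X}_i))X_i^{\top}$ does not tend to $\pmb{\Sigma}_4$. You need $X_i-\widehat{\pmb{g}}^C(\mathcal{X}_i)$ there, obtained by moving the $-\widehat{\pmb{g}}^C(\cdot)^{\top}(\widehat{\pmb{\beta}}^C-\pmb{\beta})$ part of $\widehat m^C-m$ into that term.

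The paper's proof uses no second-order expansion. It obtains (\ref{Dnprima}) ``by similar arguments'' to Theorem~\ref{theorem1}(i):
\begin{itemize}
\item it splits $\widetilde X_i=\widetilde G_i+\eta_i-(\mathbf{W}_1\pmb{\eta})_i$;
\item it bounds the $\widetilde G$ and $\mathbf{W}_1\pmb{\eta}$ pieces by products of uniform rates, each $o(n^{-1/4})$;
\item it bounds $\sum_i\eta_{ij}c_{1i}$ by Lemma~\ref{lemma4}, with $\eta_{ij}$ as the centred variables and $c_{1i}$ as weights;
\item it controls the $\widehat\Delta_1$ corrections $d_{nr}$ through the uniform rate of $\widehat\Delta_1$.
\end{itemize}
That use of Lemma~\ref{lemma4} has exactly the defect described above: $c_{1i}$ contains $1-\delta_i/\Delta_1(\mathcal{X}_i)$ and so is not independent of $\eta_i$. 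The same issue affects $A_{n1}$ in Theorem~\ref{theorem1}. Switching to the paper's route will therefore not close your gap.

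Carrying the missing projection through gives the leading term $n^{-1/2}\sum_i\delta_i\varepsilon_i\{\eta_i^C/\Delta_1(\mathcal{X}_i)+\pmb{\Sigma}_4\pmb{\Sigma}_0^{-1}\eta_i^C\}$. So part (i) with the displayed $\pmb{V}_{IP}$ can only be established when $\mathbf{d}\equiv\mathbf{0}$, for example when $\delta$ and $X$ are conditionally independent given $\mathcal{X}$ (the case discussed in Section~\ref{comm}).

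The rest of your plan is sound: Step~1, the $\varepsilon$-term and the remainder $R_i$. Part~(ii) is unaffected by the problem in (i), since it needs only uniform bounds and a sufficiently fast a.s.\ rate for $\widehat{\pmb{\beta}}^{IP}-\pmb{\beta}$, not the exact limit. There, the $\widehat\Delta_1$ error is absorbed as follows: condition on $\{(X_i,\mathcal{X}_i,\delta_i)\}_{i=1}^n$, under which the $\varepsilon_j$ are independent and centred. This turns the term into a product of the $k_2$ rate and the $k_1$ stochastic rate. A crude bound through $\sup|1/\widehat\Delta_1-1/\Delta_1|$ alone would leave an extra $k_2$ term in the bound.
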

Some comments on these results will be included in the next Section \ref{comm}.

\subsection{Some comments on the obtained results}\label{comm}

The asymptotic study presented in sections \ref{impu-asym}, \ref{surro-asym} and \ref{inver-asym} shows some similarities and some differences between the proposed estimators. On the one hand, the three estimators of $\pmb{\beta}$ ((\ref{beta-I}), (\ref{beta-R}) and (\ref{beta-IP})) are $\sqrt{n}$-consistent. This fact indicates that the presence of the nonparametric component $m(\cdot)$ in the SFPL-MAR model (\ref{eq_model1})-(\ref{MAR}) does not affect the rate of convergence of the estimators of $\pmb{\beta}$. In the same way, such estimators maintain the asymptotic normality, the difference between the corresponding distributions being in the asymptotic variances. On the other hand, the three estimators of $m(\cdot)$ ((\ref{m-I}), (\ref{m-R}) and (\ref{m-IP})) converge at the same uniform rate. If one considers the same values for the several smoothing parameters ($k_1=k_2=k_3=k$), the corresponding rate of convergence is the same as that obtained in \cite{Kud13} and \cite{linav20} for the pure functional nonparametric model and for the SFPL model (\ref{eq_model1}), respectively (in these two papers, data were completely observed). It is worth being noted that such uniform convergence rate is almost sure, while in \cite{wan07} and \cite{lin19} the uniform rate was in probability.

Now we focus on the asymptotic variance of the three estimators of $\pmb{\beta}$. Given the complexity of such variances, we will discuss some special cases. In the case where $\delta_i$ is independent of $X_i$ given $\mathcal{X}_i$, one has that $\eta_1=\eta_1^C$ and $\Delta(X_1,\mathcal{X}_1)=\Delta_1(\mathcal{X}_1)$. Then, it is easy to obtain that the asymptotic variances of both the imputation ($\widehat{\pmb{\beta}}^I$) and the surrogate ($\widehat{\pmb{\beta}}^R$) estimators reduce to $\pmb{\Sigma}_{01}^{-1}\mathbb{E}(\Delta_1(\mathcal{X}_{1})\eta_1\eta_1^{\top}\sigma^2(X_1,\mathcal{X}_{1}))\pmb{\Sigma}_{01}^{-1},$ where $\pmb{\Sigma}_{01}=\mathbb{E}(\Delta_1(\mathcal{X}_{1})\eta_1\eta_1^{\top}),$ while the asymptotic variance of the inverse marginal probability weighted estimator ($\widehat{\pmb{\beta}}^{IP}$) reduces to $\pmb{\Sigma}_{1}^{-1}\mathbb{E}(\eta_1\eta_1^{\top}\frac{\sigma^2(X_1,\mathcal{X}_{1})}{\Delta_1(\mathcal{X}_{1})})\pmb{\Sigma}_{1}^{-1}.$ If, in addition, $\Delta_1(\mathcal{X}_1)$ is equal to a constant $a$, then the three estimators have the same asymptotic variance, $\frac{1}{a}\pmb{\Sigma}_{1}^{-1}\mathbb{E}(\eta_1\eta_1^{\top}\sigma^2(X_1,\mathcal{X}_{1}))\pmb{\Sigma}_{1}^{-1}.$ In the case $a=1$, such common variance coincides with the variance of the estimator when the data are completely observed (see \citealt{linav20}). Note that these conclusions coincide with those in \cite{wan07}.

\bigskip
\noindent{\large \bf{Acknowledgements}}
\noindent This research/work is part of the grant PID2023-147127OB-I00 ``ERDF/EU'', funded by MCIN/AEI/10.13039/501100011033/. It has also been supported by the Xunta de Galicia (Grupos de Referencia Competitiva ED431C-2024/14) and by CITIC as a center accredited for excellence within the Galician University System and a member of the CIGUS Network, receives subsidies from the Department of Education, Science, Universities, and Vocational Training of the Xunta de Galicia. Additionally, it is co-financed by the EU through the FEDER Galicia 2021-27 operational program (Ref. ED431G 2023/01).

\appendix

\section{Appendix: Proofs}\label{app}

In this Appendix, we first present some technical lemmas to be used in the proofs of our theorems. Some of such lemmas are known while the others are slight modifications of known lemmas/theorems. Then, we obtain the proofs of our main results.

\subsection{Technical lemmas}\label{app1}

\begin{lemma}
	\label{LemaStout} (\citealt{stout74}, Corollary 5.2.3) Let $V_{1},...,V_{n}$ be independent r.v. with 0 means and
	$\max_{1\leq i\leq n}E\left\vert V_{i}\right\vert ^{2+\delta}<\infty$, for
	some $\delta>0$. If, in addition, $\lim\inf_{n\rightarrow\infty}n^{-1}\sum_{i=1}^{n}Var\left(
	V_{i}\right)  >0,$ then%
	\[
	\lim\sup_{n\rightarrow\infty}\left\vert S_{n}\right\vert /\left(  2s_{n}%
	^{2}\log\log s_{n}^{2}\right)  ^{1/2}=1\text{ a.s.,}%
	\]
	where $S_{n}=\sum_{i=1}^{n}V_{i}$ and $s_{n}^{2}=\sum_{i=1}^{n}Var\left(
	V_{i}\right)  $.
\end{lemma}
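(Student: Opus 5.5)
This lemma is the law of the iterated logarithm for sums of independent, non-identically distributed random variables (Stout, 1974, Corollary 5.2.3). The paper only cites it, so my plan is to derive it from Kolmogorov's LIL for bounded increments, using truncation.

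\emph{Step 1: growth of $s_n^2$.} The moment condition gives $s_n^2\le Cn$. The liminf condition gives $s_n^2\ge cn$ for $n$ large. Hence $s_n^2\asymp n$, $s_n\to\infty$, and the normalizer $a_n:=(2s_n^2\log\log s_n^2)^{1/2}$ is of order $(n\log\log n)^{1/2}$.

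\emph{Step 2: truncation.} Fix the level $\tau_i=i^{1/2}/(\log\log i)^{\gamma}$ for some $\gamma>1/2$, and set $V_i'=V_i1_{\{|V_i|\le\tau_i\}}-\mathbb{E}(V_i1_{\{|V_i|\le\tau_i\}})$ and $V_i''=V_i-V_i'$. Three checks are needed.
\begin{itemize}
\item[(a)] By Markov's inequality, $\sum_i P(|V_i|>\tau_i)\le\sum_i\mathbb{E}|V_i|^{2+\delta}\tau_i^{-2-\delta}<\infty$, since $\tau_i^{2+\delta}$ grows like $i^{1+\delta/2}$ up to log factors. By Borel--Cantelli, only finitely many truncations are active almost surely. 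That part of $\sum V_i''$ is therefore $O(1)$, which is $o(a_n)$.
\item[(b)] Because $\mathbb{E}V_i=0$, the centering terms satisfy $|\mathbb{E}(V_i1_{\{|V_i|\le\tau_i\}})|=|\mathbb{E}(V_i1_{\{|V_i|>\tau_i\}})|\le C\tau_i^{-1-\delta}$. Their sum up to $n$ is $o(\sqrt n)$, hence $o(a_n)$.
\item[(c)] Similarly, $\mathrm{Var}(V_i')=\mathrm{Var}(V_i)+O(\tau_i^{-\delta})$. So $s_n'^2/s_n^2\to1$, and the normalizers for $V_i'$ and $V_i$ are asymptotically equivalent.
\end{itemize}

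\emph{Step 3: Kolmogorov's LIL.} The truncated variables $V_i'$ satisfy $|V_i'|\le 2\tau_i$. Since $s_n'\asymp\sqrt n$ and $\log\log s_n'^2\asymp\log\log n$, for $i\le n$ we have
\[
2\tau_i\le 2\tau_n=o\bigl(s_n'/(\log\log s_n'^2)^{1/2}\bigr).
\]
This is exactly Kolmogorov's condition, so $\limsup_n |\sum_{i\le n}V_i'|/(2s_n'^2\log\log s_n'^2)^{1/2}=1$ a.s. Transferring back with Step 2 gives the claim.

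\emph{Main obstacle.} The delicate step is the proof of Kolmogorov's LIL itself. Its upper bound uses exponential (Bernstein-type) inequalities along a geometric subsequence $n_k$ with $s_{n_k}^2\approx\theta^k$, combined with a maximal inequality such as L\'evy--Ottaviani. Its lower bound needs a sharp lower exponential bound together with the second Borel--Cantelli lemma applied to independent blocks. I would cite this step (Stout 1974, Thm.\ 5.2.2) rather than reprove it. The truncation estimates in Step 2 then reduce to routine moment calculations.
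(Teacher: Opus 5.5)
Your proposal is correct. The paper itself gives no argument: it imports the result verbatim from Stout (1974, Corollary 5.2.3) and uses it as a black box, for instance in (\ref{LL1})--(\ref{LL2}) in the proof of Theorem~\ref{theorem1}(ii).

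You instead reduce the $(2+\delta)$-moment LIL to Kolmogorov's LIL for small summands by truncating at $\tau_i=\sqrt{i}/(\log\log i)^{\gamma}$ with $\gamma>1/2$. The checks go through as you state them:
\begin{itemize}
\item the uniform moment bound gives $\sum_i P(|V_i|>\tau_i)<\infty$;
\item it also gives a total centering error $O\bigl(\sum_{i\le n}\tau_i^{-1-\delta}\bigr)=o(\sqrt n)$;
\item and a total variance loss $O\bigl(\sum_{i\le n}\tau_i^{-\delta}\bigr)=o(n)$;
\item the condition $\liminf_n s_n^2/n>0$ gives $s_n\asymp\sqrt n$, which makes these errors negligible and yields $|V_n'|\le 2\tau_n=o\bigl(s_n'/(\log\log s_n'^2)^{1/2}\bigr)$; this last step is where $\gamma>1/2$ is used.
\end{itemize}

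Your route buys transparency. It shows where each hypothesis enters and leaves Kolmogorov's classical theorem as the only nontrivial input. The citation buys brevity. For the paper's actual use, with i.i.d.\ summands in (\ref{LL1})--(\ref{LL2}), even the Hartman--Wintner LIL would suffice.

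Two minor points if you write this up:
\begin{itemize}
\item The moment hypothesis has to be read as $\sup_{i\ge1}E|V_i|^{2+\delta}<\infty$, since a maximum over $1\le i\le n$ for fixed $n$ is vacuous. This uniform reading is what your Step 1 tacitly uses.
\item $\tau_i$ is undefined for $i\le 2$ and needs redefining there, e.g.\ with $\log\log(i\vee 3)$. It is also not monotone for small $i$, so your claim $\tau_i\le\tau_n$ for $i\le n$ is false as written. The claim is unnecessary, though: Kolmogorov's condition only bounds the $n$-th summand in terms of $s_n$, and finitely many initial terms do not affect the LIL.
\end{itemize}
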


\begin{lemma}\label{lemma4} (\citealt{anev06}, Lemma 3) Let $V_{1},\ldots,V_{n}$ be independent r.v. with zero means and such
	that for some $r\geq2$, $\max_{1\leq j\leq n}E\left\vert V_{j}\right\vert
	^{r}\leq C<\infty$. Assume that $\left\{  a_{ij},\text{ }i,j=1,\ldots,n\right\}
	$ is a sequence of positive numbers such that $\max_{1\leq i,j\leq
		n}\left\vert a_{ij}\right\vert =O\left(  a_{n}\right)  $ and $\max_{1\leq
		i\leq n}\sum_{j=1}^{n}\left\vert a_{ij}\right\vert =O\left(  b_{n}\right)  $.
	If, in addition,
	\begin{equation}
		\exp\left(  -\dfrac{b_{n}^{1/2}\left(  \log n\right)  ^{2}%
		}{b_{n}^{1/2}+a_{n}^{1/2}n^{1/r}\log n}\right)  =O\left(  n^{-a}\right) ,
		(a>2) ,\nonumber\\
	\end{equation}
	and
	\begin{equation}
		a_{n}^{1/2}n^{1/r+b}=O\left(  b_{n}^{1/2}\log n\right),
		(b>0),\nonumber\\
	\end{equation}
	then
	\begin{equation}
		\max_{1\leq i\leq n}\left\vert \sum_{j=1}^{n}a_{ij}V_{j}\right\vert =O\left(
		a_{n}^{1/2}b_{n}^{1/2}\log n\right)  \text{ a.s.}\nonumber\\
	\end{equation}
	As a matter of fact, the conclusion of Lemma \ref{lemma4} remains unchanged
	when $\left\{  a_{ij},\text{ }i,j=1,\ldots,n\right\}  $ is a random sequence
	satisfying the conditions above almost surely.
\end{lemma}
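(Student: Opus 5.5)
The plan is to truncate each $V_j$ at the level $M_n=n^{1/r}$ and treat the bounded part and the tail part separately. The first hypothesis is tailored to a Bernstein bound at this level combined with a union bound over $i$. The second hypothesis is tailored to control the tail part. Write $V_j=V_j^{(1)}+V_j^{(2)}$, where
\[
V_j^{(1)}=V_j1_{\{|V_j|\le M_n\}}-\mathbb{E}\bigl(V_j1_{\{|V_j|\le M_n\}}\bigr),\qquad V_j^{(2)}=V_j1_{\{|V_j|> M_n\}}-\mathbb{E}\bigl(V_j1_{\{|V_j|> M_n\}}\bigr).
\]
This uses $\mathbb{E}V_j=0$. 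When the $a_{ij}$ are random, I would argue conditionally on them, assuming they are independent of the $V_j$ as in the intended applications.

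\emph{Bounded part.} The summands $a_{ij}V_j^{(1)}$ are independent, centred and bounded by $2a_nM_n$. Their variances sum to at most $C\sum_j a_{ij}^2\le C a_n b_n$. Bernstein's inequality with $t=c\,a_n^{1/2}b_n^{1/2}\log n$ gives
\[
P\Bigl(\Bigl|\sum_j a_{ij}V_j^{(1)}\Bigr|>t\Bigr)\le 2\exp\Bigl(-c'\,\frac{b_n^{1/2}(\log n)^2}{b_n^{1/2}+a_n^{1/2}n^{1/r}\log n}\Bigr)=O(n^{-a}).
\]
The second inequality uses the first hypothesis, after enlarging $c$ so that $c'\ge 1$. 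A union bound over $i\le n$ gives a total probability $O(n^{1-a})$. Since $a>2$, this is summable in $n$, and Borel--Cantelli yields $\max_i|\sum_j a_{ij}V_j^{(1)}|=O(a_n^{1/2}b_n^{1/2}\log n)$ a.s.

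\emph{Tail part.} The centring terms satisfy
\[
|\mathbb{E}V_j1_{\{|V_j|>M_n\}}|\le C M_n^{1-r}=Cn^{1/r-1}.
\]
Summing against $a_{ij}$ therefore contributes at most $C a_n n^{1/r}$. For the random part,
\[
\max_i\Bigl|\sum_j a_{ij}V_j1_{\{|V_j|>M_n\}}\Bigr|\le a_n T_n,\qquad T_n:=\sum_{j\le n}|V_j|1_{\{|V_j|>n^{1/r}\}},
\]
and $\mathbb{E}T_n\le Cn\,n^{(1-r)/r}=Cn^{1/r}$. I then show $T_n=O(n^{1/r+b})$ a.s. along dyadic blocks. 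For $2^k\le n<2^{k+1}$, we have $T_n\le \tilde T_k:=\sum_{j\le 2^{k+1}}|V_j|1_{\{|V_j|>2^{k/r}\}}$. Moreover $\mathbb{E}\tilde T_k\le C2^{k/r}$, so Markov gives
\[
P\bigl(\tilde T_k>2^{k(1/r+b)}\bigr)\le C2^{-kb},
\]
which is summable in $k$. Hence, a.s., the whole tail part is $O(a_nn^{1/r+b})=O(a_n^{1/2}\cdot a_n^{1/2}n^{1/r+b})$. By the second hypothesis this is $O(a_n^{1/2}b_n^{1/2}\log n)$, and the deterministic centring term $Ca_nn^{1/r}$ is smaller still. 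Combining the two parts gives the claim.

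The main obstacle is the tail part. Only $r$-th moments are available, so no single a.s. bound on $\max_j|V_j|$ of order $n^{1/r}$ exists. The argument therefore has to go through the aggregate $T_n$, with the extra factor $n^b$ in the second hypothesis absorbing the dyadic Markov/Borel--Cantelli step. The Bernstein part is routine once the truncation level is matched to the first hypothesis.
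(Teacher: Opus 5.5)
Your proof is correct and is essentially the argument behind the cited result; the paper gives no proof of its own, only the reference to Aneiros-P\'erez and Vieu (2006). The shared argument is: truncation at $n^{1/r}$; Bernstein plus a union bound over $i$ for the bounded part, where the first displayed condition is, up to constants, exactly the Bernstein exponent at level $a_n^{1/2}b_n^{1/2}\log n$ and $a>2$ is exactly what makes $n\cdot n^{-a}$ summable; and the crude bound $\max_{i,s}|a_{is}|\sum_{j\le n}|V_j|1_{\{|V_j|>n^{1/r}\}}$ for the tail, absorbed by the second condition via your dyadic Markov/Borel--Cantelli step (which rightly uses that the $V_j$ form one sequence rather than a triangular array).

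The one thing to fix is your justification of the random-weights case. In the paper's uses the weights are not independent of the $V_j$ (e.g.\ $a_{is}=\omega_{k_1}(\mathcal{X}_i,\mathcal{X}_s)$ with $V_s=\eta_{sj}$). The hypothesis you actually need is that, given the $\sigma$-field generating the $a_{ij}$, the $V_j$ remain independent and conditionally centred with uniformly bounded conditional $r$-th moments. Under that hypothesis your argument goes through once the truncated pieces are centred conditionally. Without some such condition the random-weights claim is false: take $V_j$ i.i.d.\ Rademacher and $a_{ij}=n^{-1}(1+1_{\{V_j>0\}})$.
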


\begin{lemma}\label{lemma1} (\citealt{Kud13}, Theorem 2) If assumptions (A1)-(A6)(i) (except (A2)(v) and also except both (A3) and (A5) for $u=0$) are satisfied then we have that\\
	\begin{equation}
		\sup_{\chi\in S_{\mathcal{F}}}\left|g_{j}(\chi)-\sum_{i=1}^{n}\omega_{k_1}(\chi,\mathcal{X}_i)X_{ij} \right|= O\left(\phi^{-1}\left(\frac{k_1}{n}\right) ^{\alpha}+\sqrt{\frac{\psi_{S_{\mathcal{F}}}(\frac{\log n}{n})}{k_1}}\right) \ a.s. \ (j=0,1,\ldots,p), \nonumber
	\end{equation}
	where $g_0(\cdot)=m(\cdot)$ and $X_{i0}=g_0(\mathcal{X}_i)+\varepsilon_i.$ 
\end{lemma}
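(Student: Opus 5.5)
This lemma is quoted from \cite{Kud13} (Theorem 2), so the proof mostly consists of checking that our setting fits that theorem. The plan is to apply the uniform almost sure rate of \cite{Kud13} for the $k$NN functional regression estimator separately to each of the $p+1$ regressions $\mathbb{E}(X_{1j}|\mathcal{X}_1=\chi)=g_j(\chi)$, $j=0,1,\ldots,p$. Since the estimator $\sum_i\omega_{k_1}(\chi,\mathcal{X}_i)X_{ij}$ only uses $K_1$ and $k_1$, only (A3) and (A5) with $u=1$ are needed. Concretely, for each fixed $j$ I will regard $(X_{ij},\mathcal{X}_i)$ as an i.i.d. sample from the nonparametric model $X_{ij}=g_j(\mathcal{X}_i)+\eta_{ij}$ with $\mathbb{E}(\eta_{ij}|\mathcal{X}_i)=0$. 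For $j=0$ this is $X_{i0}=m(\mathcal{X}_i)+\varepsilon_i$, and $\mathbb{E}(\varepsilon|\mathcal{X})=\mathbb{E}(\mathbb{E}(\varepsilon|X,\mathcal{X})|\mathcal{X})=0$.

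Next I verify the hypotheses of \cite{Kud13} one by one:
\begin{itemize}
\item Small-ball positivity and continuity is (A1).
\item The uniform regularly-varying factorization $\varphi_\chi(\epsilon)\approx\phi(\epsilon)g(\chi)$ with rate $O(\epsilon^\alpha)$, together with the derivative bound on $\phi$, is (A2)(i)--(iii).
\item The H\"older continuity of the regression operator on $S_\mathcal{F}$ is (A2)(iv), applied to $f=g_j$.
\item The kernel conditions are (A3) for $u=1$.
\item The entropy summability is (A4).
\item The growth conditions on $k$ relative to $n$ and $\psi_{S_\mathcal{F}}(\log n/n)$ are (A5) for $u=1$.
\item The conditional moment condition $\mathbb{E}(|X_{1j}|^r|\mathcal{X}=\chi)\le C$ uniformly in $\chi$ is (A6)(i). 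For $j=0$, note that $X_{10}=Y_1-X_1^{\top}\pmb{\beta}$. Minkowski's inequality with the bounds on $|Y_1|^r$ and $|X_{1j}|^r$ gives $\mathbb{E}(|X_{10}|^r|\mathcal{X}_1=\chi)\le C$. Jensen's inequality then bounds $\sup_\chi|g_j(\chi)|$, so the centered variable $\eta_{1j}$ also has uniformly bounded conditional $r$-th moments.
\end{itemize}

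With all hypotheses in place, Theorem 2 of \cite{Kud13} yields the displayed bound for each $j$. A maximum over finitely many $j$ preserves the almost sure $O(\cdot)$ rate.

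The only delicate point is matching the exact form of the conditions in \cite{Kud13}, especially the moment and entropy conditions, and the fact that their response is the generic scalar variable. I would handle this by stating explicitly that in their theorem the response is any real variable whose conditional moments are uniformly bounded, which is exactly what the Minkowski argument above provides.
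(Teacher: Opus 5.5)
Your proposal is correct and follows the same route as the paper, which establishes this lemma only by citing Theorem 2 of \cite{Kud13}. Your hypothesis-by-hypothesis check makes explicit what the paper leaves implicit: only $K_1$ and $k_1$ are involved; the case $j=0$ works because $X_{10}=Y_1-X_1^{\top}\pmb{\beta}=m(\mathcal{X}_1)+\varepsilon_1$ and $\mathbb{E}(\varepsilon|\mathcal{X})=0$ by the tower property; and the uniform conditional moment bound follows from Minkowski's inequality. One detail is worth adding: \cite{Kud13} states its rate in the almost-complete sense, which gives the almost sure version used here by the Borel--Cantelli lemma.
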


\begin{lemma}\label{lemma1.5} If assumptions (A1)-(A6)(i) (except (A2)(iv) and also except both (A3) and (A5) for $u=1$) and (A7) are satisfied then we have that
	\begin{equation}
		\sup_{\chi\in S_{\mathcal{F}}}\left|g_{j}(\chi)-\sum_{i=1}^{n}\delta_i\omega_{k_0}(\chi,\mathcal{X}_i)X_{ij} \right|= O\left(\phi^{-1}\left(\frac{k_0}{n}\right) ^{\alpha}+\sqrt{\frac{\psi_{S_{\mathcal{F}}}(\frac{\log n}{n})}{k_0}}\right) \ a.s. \ (j=0,1,\ldots,p), \nonumber
	\end{equation}
where $g_0(\cdot)=m(\cdot)$ and $X_{i0}=g_0(\mathcal{X}_i)+\varepsilon_i.$ 
\end{lemma}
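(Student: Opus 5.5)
The plan is to reduce the weighted sum to a ratio of two ordinary (complete-data) $k$NN smoothers, apply Lemma \ref{lemma1} with $(K_0,k_0)$ in place of $(K_1,k_1)$ to numerator and denominator, and control the ratio using (A7). One point must be settled at the outset. The natural limit of the sum is $g_j^C(\chi)=\mathbb{E}(\delta X_j|\mathcal{X}=\chi)/\Delta_1(\chi)$, whereas the statement names $g_j(\chi)$. For $j\geq 1$ these two functions coincide only under an extra condition, discussed in the last paragraph below.

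\textbf{Step 1 (ratio form).} Write $\bar\omega(\chi,\mathcal{X}_i)=K_0(H_{k_0,\chi}^{-1}d(\mathcal{X}_i,\chi))/\sum_{l}K_0(H_{k_0,\chi}^{-1}d(\mathcal{X}_l,\chi))$ for the complete-sample weights. Then
$$\sum_{i=1}^n\delta_i\omega_{k_0}(\chi,\mathcal{X}_i)X_{ij}=\frac{N_j(\chi)}{D(\chi)},\qquad N_j(\chi)=\sum_i\bar\omega(\chi,\mathcal{X}_i)\delta_iX_{ij},\quad D(\chi)=\sum_i\bar\omega(\chi,\mathcal{X}_i)\delta_i .$$
Both $N_j$ and $D$ are standard functional $k$NN regression estimators, with responses $\delta_iX_{ij}$ and $\delta_i$ respectively. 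These responses inherit the conditional moment bounds in (A6)(i), since $|\delta_iX_{ij}|\le|X_{ij}|$ and $\delta_i\le 1$.

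\textbf{Step 2 (uniform rates).} I apply the Kudraszow--Vieu theorem (Lemma \ref{lemma1}, run with $K_0,k_0$) to $N_j$ and $D$. Writing $r_0=\phi^{-1}(k_0/n)^{\alpha}+\sqrt{\psi_{S_{\mathcal{F}}}(\log n/n)/k_0}$, this gives
$$\sup_{\chi}|N_j(\chi)-\Delta_1(\chi)g_j^C(\chi)|=O(r_0),\qquad \sup_\chi|D(\chi)-\Delta_1(\chi)|=O(r_0)\quad\text{a.s.}$$
The bias part of that theorem needs $\alpha$-H\"older continuity of the targets $\Delta_1 g_j^C$ and $\Delta_1$. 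I expect to extract this from (A2)(v) together with boundedness of $g_j^C$, which follows from (A6)(i). The H\"older property of $\Delta_1$ itself is the delicate point, since (A10) is not among the hypotheses here. If it cannot be obtained, I would bound the bias directly, term by term, inside the ratio.

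\textbf{Step 3 (ratio).} By (A7) and Step 2, almost surely and for $n$ large, $\inf_\chi D(\chi)\ge \tfrac12\inf_\chi\Delta_1(\chi)>0$. The identity
$$\frac{N_j}{D}-\frac{\Delta_1 g_j^C}{\Delta_1}=\frac{(N_j-\Delta_1g_j^C)-g_j^C(D-\Delta_1)}{D},$$
combined with $\sup_\chi|g_j^C|<\infty$, then gives $\sup_\chi|N_j/D-g_j^C|=O(r_0)$ a.s.

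\textbf{Step 4 (identifying the target; main obstacle).} For $j=0$, the MAR condition (\ref{MAR}) gives $\mathbb{E}(\delta\varepsilon|X,\mathcal{X})=\Delta(X,\mathcal{X})\mathbb{E}(\varepsilon|X,\mathcal{X})=0$. Hence $g_0^C=m=g_0$ and the statement follows exactly as worded.

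For $j\ge1$, the identity $g_j^C=g_j$ is equivalent to $\mathbb{E}(\delta\,\eta_{j}|\mathcal{X})=0$, that is, to $\mathrm{Cov}(\delta,X_j|\mathcal{X})=0$. This holds, for instance, when $\Delta(x,\chi)=\Delta_1(\chi)$. In that case Step 3 yields the displayed rate for $g_j$ directly. In general, however, what the argument proves is the rate towards $g_j^C$, and the additional uniform bias $\sup_\chi|g_j^C-g_j|$ does not vanish.

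This identification is where I expect the real difficulty to lie. I would therefore state this conditional-uncorrelation requirement explicitly when invoking the lemma for $j\ge1$. In the later proofs it is only used through $\eta^C$, for which the $g_j^C$ version from Step 3 is exactly what is needed.
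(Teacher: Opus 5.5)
\textbf{The gap is in Step 2.} Applying Lemma \ref{lemma1} to $N_j$ and $D$ separately requires $\alpha$-H\"older continuity of $\Delta_1 g_j^C=\mathbb{E}(\delta X_j|\mathcal{X}=\cdot)$ and of $\Delta_1$. This cannot be extracted from (A2)(v) plus boundedness: a nonzero constant $g_j^C$ times a discontinuous $\Delta_1$ is discontinuous. Moreover, (A10) is not among the hypotheses. The lower bound on $D$ in Step 3 inherits the same problem.

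Your fallback is the right one, and it costs nothing. In your Step 3 identity the $\Delta_1(\chi)$ terms cancel, so separate rates for $N_j$ and $D$ are never needed:
\[
N_j(\chi)-g_j^C(\chi)D(\chi)=\sum_{i=1}^n\bar\omega(\chi,\mathcal{X}_i)\delta_i\eta_{ij}^C+\sum_{i=1}^n\bar\omega(\chi,\mathcal{X}_i)\delta_i\bigl(g_j^C(\mathcal{X}_i)-g_j^C(\chi)\bigr).
\]
\begin{itemize}
\item The first sum is a complete-data $k$NN smoother with response $\delta_i\eta_{ij}^C$. Its regression on $\mathcal{X}$ is identically zero, hence trivially H\"older, and its conditional moments are bounded via (A6)(i) and (A7). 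Lemma \ref{lemma1} therefore gives $O(r_0)$ uniformly a.s.
\item The second sum is at most $C H_{k_0,\chi}^{\alpha}D(\chi)$, by (A2)(v) and the support of $K_0$. After division by $D$ it is handled by the same uniform control of $H_{k_0,\chi}$ that yields the $\phi^{-1}(k_0/n)^{\alpha}$ term in Lemma \ref{lemma1}.
\item For $D$ you only need $\inf_\chi D(\chi)\ge\frac12\inf_{S_{\mathcal{F}}}\Delta_1$ eventually. Write $D(\chi)=\sum_i\bar\omega(\chi,\mathcal{X}_i)\Delta_1(\mathcal{X}_i)+\sum_i\bar\omega(\chi,\mathcal{X}_i)(\delta_i-\Delta_1(\mathcal{X}_i))$. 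The first sum is at least $\inf_{S_{\mathcal{F}}}\Delta_1$, since the weights are nonnegative and sum to one. The second is again a zero-regression smoother, hence $O(r_0)$ uniformly a.s.
\end{itemize}
So no regularity of $\Delta_1$ beyond (A7) is needed.

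\textbf{Step 4 is correct and corrects the statement.} For $j\ge1$, $g_j^C-g_j=\mathbb{E}(\delta\eta_j|\mathcal{X})/\Delta_1(\mathcal{X})$ need not vanish under (\ref{MAR}), whereas $g_0^C=m^C=m$. The paper evidently intends $g_j^C$. Its hypotheses use (A2)(v) rather than (A2)(iv), and Lemmas \ref{lemma3.5}, \ref{lemma3.7} and (\ref{A61ast}) use the result only through $g_j^C$ and $\eta^C$.

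\textbf{Comparison with the paper.} The paper's proof is a one-line appeal to Theorem 2 of \cite{rac21} with $b=0$, i.e.\ $k$NN local linear estimation with MAR responses, reduced to the local constant case. That is shorter, but it imports an external theorem with its own conditions and leaves the limit implicit. Being a complete-case estimate of the regression of the response on $\mathcal{X}$, it can return $g_j$ for $j\ge1$ only when $g_j^C=g_j$, which is exactly your condition. Your ratio reduction, once repaired as above, is self-contained. It uses only Lemma \ref{lemma1}, already in the appendix, and shows that (A2)(v) and (A7) suffice for the $g_j^C$ version.
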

\noindent \emph{Proof.} Lemma \ref{lemma1.5} can be seen as a special case of Theorem 2 in \cite{rac21}, where almost complete rates of uniform convergence for the nonparametric local linear estimator are obtained. Indeed, the later can be obtained by taking $b=0$ in \cite{rac21}. $\Box$

\begin{lemma}\label{lemma2} (\citealt{linav20}, Lemma 2) Under Assumption (A3) with $u=1$ we have that\\
	\begin{equation}
		\max_{1\leq i,j\leq n}| \omega_{k_1}(\mathcal{X}_i,\mathcal{X}_j) |=O\left(\frac{1}{k_1}\right). \nonumber
	\end{equation}
\end{lemma}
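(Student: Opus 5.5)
The statement to prove is Lemma \ref{lemma2}: $\max_{1\le i,j\le n}|\omega_{k_1}(\mathcal{X}_i,\mathcal{X}_j)|=O(1/k_1)$. The plan is to bound the numerator and denominator of (\ref{pesos-I}) separately, deterministically and uniformly in $i$ and $j$. The only ingredient is the definition of $H_{k_1,\chi}$ together with (A3) for $u=1$.

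\textbf{Numerator.} By (A3)(i), $K_1$ is bounded. Hence $0\le K_1(H_{k_1,\mathcal{X}_i}^{-1}d(\mathcal{X}_j,\mathcal{X}_i))\le \sup_t K_1(t)=:C_K<\infty$ for every pair $(i,j)$.

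\textbf{Denominator, kernel case.} We need a lower bound of order $k_1$ for
$$D_i=\sum_{l=1}^n K_1\left(H_{k_1,\mathcal{X}_i}^{-1}d(\mathcal{X}_l,\mathcal{X}_i)\right).$$
By construction of $H_{k_1,\chi}$, the ball $B(\chi,H_{k_1,\chi})$ contains exactly $k_1$ sample points. For each such point, $t_l:=H_{k_1,\chi}^{-1}d(\mathcal{X}_l,\chi)\in[0,1]$.

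When $K_1(1)>0$, monotonicity gives $K_1(t_l)\ge K_1(1)>0$. Therefore $D_i\ge k_1K_1(1)$.

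When $K_1(1)=0$, assumption (A3)(ii) gives $K_1'\le C'<0$ on $[0,1)$. Integrating from $t$ to $1$ yields $K_1(t)\ge |C'|(1-t)$, which is not bounded away from $0$ near $t=1$. So instead I would use the minimality of $H_{k_1,\chi}$: the $k_1$ neighbours are ordered by their distance to $\chi$. Consider the ball $B(\chi,H_{k_1,\chi}/2)$. Since $d(\mathcal{X}_l,\chi)\le H/2$ implies $t_l\le 1/2$, each point in this ball contributes at least $K_1(1/2)\ge |C'|/2$.

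\textbf{Main obstacle.} The difficulty is that the number of neighbours inside radius $H/2$ is not deterministically of order $k_1$. Here I would argue as follows. A bound in which only $k_1$ points lie within radius $H$ is weak if they all cluster near the boundary. However, all $t_l$ with $l$ among the $k_1$ nearest satisfy $t_l\le 1$, and $K_1(t_l)\ge|C'|(1-t_l)$. So
$$D_i\ge |C'|\sum_{l}(1-t_l).$$
To bound this below I would exploit the (a.s.) regular variation from (A1)--(A2)(iii), which forces the $(k_1/2)$-th neighbour distance to be a fraction of the $k_1$-th. This follows from the uniform ratio $H_{k/2,\chi}/H_{k,\chi}\to c<1$ a.s., with uniformity in $\chi$ coming from (A4)--(A5), as in \cite{Kud13}. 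Consequently at least $k_1/2$ points have $t_l\le c'<1$, giving $D_i\ge (k_1/2)K_1(c')$ for $n$ large.

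\textbf{Conclusion.} Combining the two bounds gives $\omega_{k_1}(\mathcal{X}_i,\mathcal{X}_j)\le C_K/(c\,k_1)$ uniformly in $(i,j)$, hence the $O(1/k_1)$ bound. Note that it is deterministic in the case $K_1(1)>0$ and almost sure otherwise.
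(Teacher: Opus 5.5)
For $K_1(1)>0$, your argument is the deterministic bound that the citation to \cite{linav20} implicitly relies on, since it lists (A3) as the only hypothesis: the numerator is at most $\sup K_1$, and the denominator is at least $k_1K_1(1)$ because exactly $k_1$ observations lie in $B(\mathcal{X}_i,H_{k_1,\mathcal{X}_i})$. The paper gives no proof beyond that citation. When $K_1(1)=0$ you genuinely depart from it, and the departure is forced, because (A3) alone cannot give the bound. Suppose the other $k_1-1$ nearest neighbours of $\mathcal{X}_i$ all have relative distances $t_l\in(1-\epsilon,1]$. Since $K_1(1)=0$ and $|K_1'|\le|C|$, this gives $K_1(t_l)\le|C|\epsilon$. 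Hence $\omega_{k_1}(\mathcal{X}_i,\mathcal{X}_i)\ge K_1(0)/(K_1(0)+k_1|C|\epsilon)$, which is not $O(1/k_1)$ as $\epsilon\to0$, and nothing in (A3) excludes such configurations. Your almost sure version under (A1), (A2), (A4), (A5) is therefore what is actually provable for kernels with $K_1(1)=0$. It also loses nothing for the paper: those assumptions hold wherever the lemma is used, and Lemma \ref{lemma4} explicitly allows random $a_{ij}$ that satisfy its bounds almost surely. The step you have only asserted, by pointing to \cite{Kud13}, is $\limsup_n\sup_{\chi\in S_{\mathcal{F}}}H_{\lceil k_1/2\rceil,\chi}/H_{k_1,\chi}<1$ a.s. Writing it out requires three ingredients. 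First, the uniform almost sure sandwich of $H_{k,\chi}$ from the proof in \cite{Kud13}, applied with both $k=k_1$ and $k=\lceil k_1/2\rceil$. Second, (A2)(iii) together with $\inf_{S_{\mathcal{F}}}g>0$, which (A2) does not state explicitly; these give $\phi(H_{\lceil k_1/2\rceil,\chi})/\phi(H_{k_1,\chi})\to1/2$ uniformly. Third, the uniform convergence theorem for the regularly varying $\phi$, which yields the limit $2^{-1/\tau}$ when the index $\tau$ is positive and $0$ when $\tau=0$.
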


\begin{lemma}\label{lemma3} (\citealt{linav20}, Lemma 3) Under conditions (A1)-(A5) (except (A2)(v) and also except both (A3) and (A5) for $u=0$), if, in addition, $\forall r\geq 3, \ 1\leq j\leq p,$ $\mathbb{E}(|X_{1j}|^{r}|\mathcal{X}_{1}=\chi  )\leq C<\infty,$ then we have that\\
	\begin{equation}
		\frac{1}{n}\widetilde{\mathbf{X}}_1^{T}\widetilde{\mathbf{X}}_1\rightarrow \pmb{\Sigma}_{1} \text{ a.s}. \nonumber 
	\end{equation}
\end{lemma}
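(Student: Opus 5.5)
The statement to prove is Lemma \ref{lemma3}: $n^{-1}\widetilde{\mathbf{X}}_1^{\top}\widetilde{\mathbf{X}}_1\to\pmb{\Sigma}_1$ a.s. The plan is to decompose each row of $\widetilde{\mathbf{X}}_1$ into a pure-noise part and a small remainder, then show that only the noise cross-products survive.

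For each row $i$ and column $j$,
\[
\widetilde X_{ij}=X_{ij}-\sum_{l=1}^n\omega_{k_1}(\mathcal{X}_i,\mathcal{X}_l)X_{lj}=\eta_{ij}+\bigl(g_j(\mathcal{X}_i)-\widehat g_j(\mathcal{X}_i)\bigr),
\]
where $\widehat g_j(\chi)=\sum_l\omega_{k_1}(\chi,\mathcal{X}_l)X_{lj}$. I would further split $g_j(\mathcal{X}_i)-\widehat g_j(\mathcal{X}_i)=-\sum_l\omega_{k_1}(\mathcal{X}_i,\mathcal{X}_l)\eta_{lj}+\bigl(g_j(\mathcal{X}_i)-\sum_l\omega_{k_1}(\mathcal{X}_i,\mathcal{X}_l)g_j(\mathcal{X}_l)\bigr)$. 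In matrix form this reads $\widetilde{\mathbf X}_1=\pmb\eta-\mathbf W_1\pmb\eta+\mathbf G$, where $\pmb\eta=(\eta_1,\dots,\eta_n)^\top$ and $\mathbf G$ collects the bias terms. Expanding $n^{-1}\widetilde{\mathbf X}_1^\top\widetilde{\mathbf X}_1$ gives $n^{-1}\pmb\eta^\top\pmb\eta$ plus cross and square terms involving $\mathbf W_1\pmb\eta$ and $\mathbf G$.

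\textbf{Main term.} $n^{-1}\pmb\eta^\top\pmb\eta\to\mathbb{E}(\eta_1\eta_1^\top)=\pmb\Sigma_1$ a.s. by the strong law of large numbers, since the conditional moment bound on $X_{1j}$ gives finite second moments.

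\textbf{Remainder terms.} I would show that $\max_i|(\mathbf W_1\pmb\eta)_{ij}|\to0$ and $\max_i|\mathbf G_{ij}|\to0$ a.s.
\begin{itemize}
\item For $\mathbf G$: under (A2)(iv) each $g_j$ is Hölder, and the weights vanish outside $B(\mathcal{X}_i,H_{k_1,\mathcal{X}_i})$. Hence $|\mathbf G_{ij}|\le C\max_i H_{k_1,\mathcal{X}_i}^\alpha$. The known uniform bound $\sup_\chi H_{k_1,\chi}=O(\phi^{-1}(k_1/n))$ a.s., which is the ingredient behind Lemma \ref{lemma1} (\citealt{Kud13}), then gives $|\mathbf G_{ij}|=O(\phi^{-1}(k_1/n)^\alpha)\to0$. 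Equivalently, one can apply Lemma \ref{lemma1} directly with $g_j$ and its noiseless version.
\item For $\mathbf W_1\pmb\eta$: apply Lemma \ref{lemma4}, conditionally on the $\mathcal{X}$'s, with $V_l=\eta_{lj}$ (conditionally centered, with $r$-th moments bounded) and $a_{il}=\omega_{k_1}(\mathcal{X}_i,\mathcal{X}_l)$. Lemma \ref{lemma2} gives $a_n=1/k_1$, and since the weights sum to one, $b_n=1$. The conditions of Lemma \ref{lemma4} are implied by $\log n/k_1\to0$ and $k_1$ growing polynomially. This yields $\max_i|(\mathbf W_1\pmb\eta)_{ij}|=O(\log n/\sqrt{k_1})\to0$ a.s.
\end{itemize}

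Given these uniform bounds, the cross terms are handled by Cauchy–Schwarz. For example, $|n^{-1}\sum_i\eta_{ij}\mathbf G_{il}|\le\max|\mathbf G|\,n^{-1}\sum_i|\eta_{ij}|\to0$, because $n^{-1}\sum_i|\eta_{ij}|$ is a.s. bounded by the SLLN. The square terms are bounded by the product of two maxima and also vanish. Combining the main term with these vanishing remainders gives the claim.

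The main obstacle is justifying Lemma \ref{lemma4} with the random weights $\omega_{k_1}$, which depend on all of the $\mathcal{X}$'s. I would handle this by conditioning on $(\mathcal{X}_1,\dots,\mathcal{X}_n)$. Given the $\mathcal{X}$'s, the $\eta_l$ are independent with zero mean (indeed $\mathbb{E}(\eta_{lj}\mid\mathcal{X}_1,\dots,\mathcal{X}_n)=0$ by independence across $l$), the weights are fixed, and Lemma \ref{lemma4} explicitly allows random coefficients satisfying the bounds a.s.

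One further point needs care. The weight structure includes the diagonal term $\omega_{k_1}(\mathcal{X}_i,\mathcal{X}_i)$, which makes $(\mathbf W_1\pmb\eta)_i$ depend on $\eta_i$ itself. This term is $O(1/k_1)\cdot|\eta_{ij}|$ by Lemma \ref{lemma2}, and its contribution after averaging is negligible by the SLLN.
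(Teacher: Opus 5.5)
Your decomposition and overall architecture match what the paper relies on. The paper only cites \cite{linav20} for this lemma, but it uses the same splitting $\widetilde{\mathbf X}_1=\widetilde{\mathbf G}_1+\pmb\eta-\mathbf W_1\pmb\eta$ in (\ref{Xtilde-3}). Its remark in the proof of Lemma~\ref{lemma3.5} indicates the argument rests on Lemma~\ref{lemma1} together with the strong law of large numbers for $n^{-1}\sum_i\eta_i\eta_i^{\top}$. Your treatment of the main term, and of the cross and square terms via maxima times $n^{-1}\sum_i|\eta_{ij}|$, is fine.

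The step that does not go through under the stated hypotheses is the bound $\max_i|(\mathbf W_1\pmb\eta)_{ij}|=O(\log n/\sqrt{k_1})$ via Lemma~\ref{lemma4}.
\begin{itemize}
\item With $a_n=1/k_1$ and $b_n=1$, the second condition of that lemma reads $n^{1/r+b}=O(\sqrt{k_1}\log n)$, i.e.\ $k_1\gtrsim n^{2/r+2b}/\log^2 n$. The first condition similarly needs $k_1\gtrsim n^{2/r}$.
\item Nothing in (A1)--(A5) gives this, since (A4)--(A5) impose no polynomial lower bound on $k_1$. For instance, $\psi_{S_{\mathcal F}}(\log n/n)\asymp\log n$ together with $k_1\asymp(\log n)^3$ satisfies both.
\item So ``$k_1$ growing polynomially'' is an extra hypothesis. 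The paper does use exactly this bound, (\ref{O-eta}), but only inside the proof of Theorem~\ref{theorem1}, where $k_1\ge n^{2/r+b}/\log^2 n$ is assumed.
\end{itemize}

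The repair is to skip the second split. Your first display already gives $\widetilde X_{ij}=\eta_{ij}+\bigl(g_j(\mathcal X_i)-\widehat g_j(\mathcal X_i)\bigr)$. For $j\ge 1$, Lemma~\ref{lemma1} needs only the present hypotheses, and it gives $\max_i|g_j(\mathcal X_i)-\widehat g_j(\mathcal X_i)|\le\sup_{\chi\in S_{\mathcal F}}|g_j(\chi)-\widehat g_j(\chi)|\to0$ a.s. This single uniform remainder, combined with your SLLN and Cauchy--Schwarz step, completes the proof with no conditioning argument. The diagonal-term concern was moot in any case, since you only ever use maxima over $i$.
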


\begin{lemma}\label{lemma3.5} Under conditions (A1)-(A5) (except (A2)(iv) and also except both (A3) and (A5) for $u=1$), if, in addition, $\forall r\geq 3, \ 1\leq j\leq p,$ $\mathbb{E}(|\delta_iX_{1j}|^{r}|\mathcal{X}_{1}=\chi  )\leq C<\infty,$ then we have that\\
	\begin{equation}
		\frac{1}{n}\widetilde{\mathbf{X}}_0^{T}\pmb{\delta}\widetilde{\mathbf{X}}_0\rightarrow \pmb{\Sigma}_{0} \text{ a.s}. \nonumber
	\end{equation}
\end{lemma}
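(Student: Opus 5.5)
We prove Lemma \ref{lemma3.5} along the lines of Lemma 3 in \cite{linav20} (our Lemma \ref{lemma3}), with the kNN weights $\omega_{k_1}$ replaced by the ``complete-case'' weights $\omega_{k_0}$ in (\ref{pesos-C}) and the Gram matrix weighted by $\pmb{\delta}$. Write $\widetilde{X}_{ij}=X_{ij}-\sum_{l=1}^n\delta_l\omega_{k_0}(\mathcal{X}_i,\mathcal{X}_l)X_{lj}$ for the $(i,j)$ entry of $\widetilde{\mathbf{X}}_0$. Since $\sum_l\delta_l\omega_{k_0}(\chi,\mathcal{X}_l)=1$, it splits as
$$\widetilde{X}_{ij}=\eta^C_{ij}+\Big(g_j^C(\mathcal{X}_i)-\sum_{l}\delta_l\omega_{k_0}(\mathcal{X}_i,\mathcal{X}_l)g_j^C(\mathcal{X}_l)\Big)-\sum_{l}\delta_l\omega_{k_0}(\mathcal{X}_i,\mathcal{X}_l)\eta^C_{lj}=:\eta^C_{ij}+A_{ij}-B_{ij}.$$
Then the $(j,s)$ entry of $\frac1n\widetilde{\mathbf{X}}_0^{\top}\pmb{\delta}\widetilde{\mathbf{X}}_0$ is $\frac1n\sum_i\delta_i\widetilde{X}_{ij}\widetilde{X}_{is}$. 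We expand it into the main term $\frac1n\sum_i\delta_i\eta^C_{ij}\eta^C_{is}$ plus cross and quadratic terms in $A$ and $B$.

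\textbf{Main term.} The variables $\delta_i\eta^C_{ij}\eta^C_{is}$ are i.i.d., and by the moment assumption they have finite expectation. The strong law of large numbers therefore gives almost sure convergence to $\mathbb{E}(\delta_1\eta^C_{1j}\eta^C_{1s})$. Conditioning on $(X_1,\mathcal{X}_1)$ and using (\ref{MAR}), $\mathbb{E}(\delta_1\mid X_1,\mathcal{X}_1)=\Delta(X_1,\mathcal{X}_1)$. Since $\eta_1^C$ is $(X_1,\mathcal{X}_1)$-measurable, the limit equals the $(j,s)$ entry of $\pmb{\Sigma}_0$.

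\textbf{Bias term $A_{ij}$.} The weights vanish outside the ball of radius $H_{k_0,\mathcal{X}_i}$, so (A2)(v) gives $|A_{ij}|\le C H_{k_0,\mathcal{X}_i}^\alpha$. The standard kNN bandwidth control from \cite{Kud13}, \cite{linav20} (uniformly over $S_{\mathcal F}$, a.s.) gives $\sup_\chi H_{k_0,\chi}=O(\phi^{-1}(k_0/n))$. Since $\delta$ only thins the sample, (A7) keeps the thinned neighbourhoods comparable. Hence $\max_i|A_{ij}|=O(\phi^{-1}(k_0/n)^\alpha)\to0$ a.s. This is also the bias part of Lemma \ref{lemma1.5}, applied with $g_j^C$ in place of $g_j$.

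\textbf{Stochastic term $B_{ij}$.} Conditionally on $(\mathcal{X}_l,\delta_l)_l$, the $\eta^C_{lj}$ have zero mean given $(\mathcal{X}_l,\delta_l=1)$, by the definition of $g_j^C$. We apply Lemma \ref{lemma4} with $a_{il}=\delta_l\omega_{k_0}(\mathcal{X}_i,\mathcal{X}_l)$, which is allowed to be random. By the analogue of Lemma \ref{lemma2} for $K_0$, combined with (A7) to lower-bound the denominator in (\ref{pesos-C}), we get $a_n=O(1/k_0)$; also $b_n=1$. The conditions of Lemma \ref{lemma4} hold provided $k_0\ge n^{2/r+b}/\log^2 n$, which follows from (A5) together with the moments for all $r$. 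This yields $\max_i|B_{ij}|=O(\log n/\sqrt{k_0})\to0$ a.s.

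\textbf{Remainders.} Every remaining term is bounded by Cauchy--Schwarz, for example
$$\Big|\frac1n\sum_i\delta_i\eta^C_{ij}A_{is}\Big|\le\max_i|A_{is}|\cdot\frac1n\sum_i|\eta^C_{ij}|\to0 \ \text{a.s.},$$
where the last average stays bounded by the SLLN. The $B$ cross terms are handled the same way, and the products $A A$, $A B$, $B B$ are bounded by products of maxima, so they vanish as well.

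\textbf{Main obstacle.} The delicate step is the kNN weight bound under missingness. $H_{k_0,\chi}$ counts all $k_0$ neighbours, but the denominator in (\ref{pesos-C}) sums only over observed ones, so we need $\sum_i\delta_iK_0(H_{k_0,\chi}^{-1}d(\mathcal{X}_i,\chi))\ge c\,k_0$ uniformly in $\chi$ and in the sample points, a.s. We would prove this with a Bernstein inequality on the covering of $S_{\mathcal F}$ from (A4), using $\inf\Delta_1>0$ from (A7) and the sandwich technique of \cite{Kud13} (bracketing $H_{k_0,\chi}$ between deterministic bandwidths $\phi^{-1}(k_0/(\beta n))$). A Borel--Cantelli argument then makes the bound almost sure.
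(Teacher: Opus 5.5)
\textbf{Verdict: same route as the paper, with one step that fails as written (the bound on $B_{ij}$).}

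Your overall architecture matches the paper's. You split $\widetilde{\mathbf X}_0$ into $\eta^C$, a bias part $A$ and a smoothed-noise part $B$. You get the main term from the SLLN (your version, with $\delta_i$ inside the average, is the correct one). You then remove the rest with uniform kNN bounds, as in Lemma~\ref{lemma3} with Lemma~\ref{lemma1.5} in place of Lemma~\ref{lemma1}.

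\textbf{The gap.} The failing step is your bound on $B_{ij}$. Lemma~\ref{lemma4} with $a_n=1/k_0$ and $b_n=1$ requires $n^{1/r+b}/\sqrt{k_0}=O(\log n)$ for a fixed $b>0$. That is a polynomial lower bound of the type $k_0\ge n^{2/r+b}/\log^2 n$. It does not follow from (A5), nor from (A4), even with moments of every order, because $b>0$ is fixed. (A4)--(A5) only force $k_0$ to outgrow a power of $\log n$. For instance, $k_0=(\log n)^3$ with $\psi_{S_{\mathcal F}}(\log n/n)\asymp\log n$ satisfies (A4)--(A5), yet the condition of Lemma~\ref{lemma4} fails for every $r$ and $b$. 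The polynomial growth condition is an extra hypothesis that appears only in Lemma~\ref{lemma3.7} and the theorems, not in the lemma you are proving. So the claim that it ``follows from (A5) together with the moments'' is false.

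\textbf{The repair.} It is available inside your own setup, and it is what the paper's appeal to Lemma~\ref{lemma1.5} amounts to. Since $\sum_l\delta_l\omega_{k_0}(\chi,\mathcal X_l)=1$, you have
\[
B_{ij}=\bigl(\widehat g_j^C(\mathcal X_i)-g_j^C(\mathcal X_i)\bigr)+A_{ij}.
\]
Hence
\[
\max_i|B_{ij}|\le\sup_{\chi\in S_{\mathcal F}}|\widehat g^C_j(\chi)-g^C_j(\chi)|+\max_i|A_{ij}|=O\Bigl(\phi^{-1}(k_0/n)^{\alpha}+\sqrt{\psi_{S_{\mathcal F}}(\log n/n)/k_0}\Bigr)\ \text{a.s.}
\]
This uses the uniform complete-case kNN rate of Lemma~\ref{lemma1.5}, read with $g_j^C$ in place of $g_j$, which is what the complete-case smoother actually estimates. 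The bound tends to $0$ by (A5)(ii). Here you only need convergence to zero, not the rate $\log n/\sqrt{k_0}$, so no polynomial condition on $k_0$ enters.

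\textbf{Two smaller points.}
\begin{enumerate}
\item You use (A7) throughout: in the weight bound, for boundedness of $g^C_j$, and for conditional moments of $X$ given $\delta=1$. (A7) is not in the lemma's hypothesis list. The paper also needs it implicitly through Lemma~\ref{lemma1.5}, so state it explicitly as an added assumption.
\item In the remainder bounds, keep the factor $\delta_i$, i.e.\ use $\frac1n\sum_i\delta_i|\eta^C_{ij}|$. The lemma only assumes moments of $\delta X$, so without $\delta_i$ the average $\frac1n\sum_i|\eta^C_{ij}|$ need not stay bounded.
\end{enumerate}
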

\noindent \emph{Proof.} Taking into account that $\eta_{ij}^C$ and $g_j^C$ ($i=1,\ldots,n$ and $j=1,\ldots,p$) satisfy the same assumptions as $\eta_{ij}$ and $g_j$, respectively, and using Lemma \ref{lemma1.5} instead of Lemma \ref{lemma1} together with the fact that, by the strong law of large numbers, $$n^{-1}\sum_{t=1}^n \eta_{tu}^C\eta_{tv}^C \rightarrow (\pmb{\Sigma}_{0})_{uv} \text{ a.s},$$ where $(\pmb{\Sigma}_{0})_{uv}=\mathbb{E}(\Delta(X_1,\mathcal{X}_{1})\eta_{1u}^C\eta_{1v}^{C})$ is the $(u, v)$-th element of $\pmb{\Sigma}_{0}$, the proof of this lemma is similar to that of Lemma \ref{lemma3}. Therefore we omit it here. $\Box$

\begin{lemma}\label{lemma3.7} Under assumptions (A1)-(A7) (except (A2)(iv) and also except both (A3) and (A5) for $u=1$), if in addition $\frac{\sqrt{n}\log^{2}n}{k_0}\rightarrow 0$, $\sqrt{n}\phi^{-1}(\frac{k_0}{n})^{\alpha}\rightarrow 0$ and $\frac{\sqrt{n}\psi_{S_{\mathcal{F}}}(\frac{\log n}{n})}{k_0}\rightarrow 0$ as $n\rightarrow \infty$, and $k_0\geq n^{(2/r)+b}/\log ^{2}n$ for $n$ large enough and some constant $b>0$ with $\frac{2}{r}+b>\frac{1}{2}$ (where $r\geq 3$), then we have:
	\begin{equation}
		\sqrt{n}\left( \widehat{\pmb{\beta}}^C - \pmb{\beta}  \right)  \rightarrow N\left(\pmb{0},\pmb{\Sigma}_{0}^{-1} \pmb{V}_C   \pmb{\Sigma}_{0}^{-1}\right) \label{betaC-norm}
	\end{equation}
	and
	\begin{equation}
	\limsup_{n\rightarrow \infty}\big(\frac{n}{2\log \log n}\big)^{\frac{1}{2}}|\widehat{\beta}^C_j - {\beta}_j|=(\sigma_{jj})^{\frac{1}{2}} \ a.s. \ (j=1,\ldots,p), \label{betaC-log}
	\end{equation}
where $$\pmb{V}_C=\mathbb{E}(\Delta(X_1,\mathcal{X}_{1})\eta_1^C\eta_1^{C {\top}}\sigma^2(X_1,\mathcal{X}_{1}))$$ and
		$\beta_{j}$ and $\widehat{\beta}^C_j$ denote the $j$-th components of the vectors $\pmb{\beta}$ and $\widehat{\pmb{\beta}}^C$, respectively, while $\sigma_{jj}$ denotes the $(j,j)$-element of the matrix $\pmb{\Sigma}_{0}^{-1} \pmb{V}_C \pmb{\Sigma}_{0}^{-1}.$
\end{lemma}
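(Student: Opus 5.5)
The proof mimics the one for the complete-data $k$NN estimator of \cite{linav20}: decompose $\widehat{\pmb{\beta}}^C-\pmb{\beta}$ and then control each piece with Lemmas \ref{lemma4}, \ref{lemma1.5}, \ref{lemma2}, \ref{lemma3.5} and \ref{LemaStout}.

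\textbf{Step 1: decomposition.} Substituting $Y_i=X_i^{\top}\pmb{\beta}+m(\mathcal{X}_i)+\varepsilon_i$ into (\ref{beta-C}) gives
$$\widehat{\pmb{\beta}}^C-\pmb{\beta}=\left(\widetilde{\mathbf X}_0^{\top}\pmb{\delta}\widetilde{\mathbf X}_0\right)^{-1}\widetilde{\mathbf X}_0^{\top}\pmb{\delta}\left(\widetilde{\mathbf m}_0+\widetilde{\pmb\varepsilon}_0\right),$$
with $\mathbf m=(m(\mathcal{X}_1),\ldots,m(\mathcal{X}_n))^{\top}$. By Lemma \ref{lemma3.5}, $n^{-1}\widetilde{\mathbf X}_0^{\top}\pmb{\delta}\widetilde{\mathbf X}_0\to\pmb{\Sigma}_0$ a.s. It therefore suffices to study $n^{-1/2}\widetilde{\mathbf X}_0^{\top}\pmb{\delta}(\widetilde{\mathbf m}_0+\widetilde{\pmb\varepsilon}_0)$.

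Write the $(i,j)$ entry of $\widetilde{\mathbf X}_0$ as
$$\eta^C_{ij}+\bigl(g^C_j(\mathcal{X}_i)-\textstyle\sum_l\delta_l\omega_{k_0}(\mathcal{X}_i,\mathcal{X}_l)g^C_j(\mathcal{X}_l)\bigr)-\textstyle\sum_l\delta_l\omega_{k_0}(\mathcal{X}_i,\mathcal{X}_l)\eta^C_{lj}.$$
Split $\widetilde{\mathbf m}_0$ analogously (a bias term only) and $\widetilde{\pmb\varepsilon}_0$ into $\pmb\varepsilon$ minus a smoothed term. Multiplying out gives a leading term $n^{-1/2}\sum_i\delta_i\eta_i^C\varepsilon_i$ plus remainders.

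A remark on the weights: in the MAR-adapted weights (\ref{pesos-C}) the diagonal factor $\delta_i$ multiplies the rows of $\widetilde{\mathbf X}_0$. Hence $\mathbf{W}_0$ should be read as acting through $\delta_l\omega_{k_0}$. The centring is then $g^C$, not $g$, because $\sum_l\delta_l\omega_{k_0}(\chi,\mathcal X_l)X_{lj}$ estimates $\mathbb{E}(\delta X_j|\mathcal X=\chi)/\Delta_1(\chi)$. This is exactly what Lemma \ref{lemma1.5} provides, with $g_j$ read as $g^C_j$, under (A2)(v) and (A7).

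\textbf{Step 2: negligibility of the remainders.}
\begin{itemize}
\item Bias$\times$bias terms are bounded by $n\cdot O\bigl(\phi^{-1}(k_0/n)^{\alpha}+\sqrt{\psi/k_0}\bigr)^2$ using Lemma \ref{lemma1.5}. After division by $\sqrt n$ they are $o(1)$, because $\sqrt n\phi^{-1}(k_0/n)^{\alpha}\to0$ and $\sqrt n\psi/k_0\to0$.
\item Terms like $\sum_i\delta_i\eta^C_{ij}\times\text{bias}_i$ are handled conditionally on $\{\mathcal X_i,\delta_i\}$-type arguments: the $\eta^C_{ij}\delta_i$ have conditional mean zero given $\mathcal{X}_i$ (since $\mathbb{E}(\delta\eta^C_j|\mathcal X)=0$). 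They are then $O(\sqrt{n\log\log n})\cdot o(1)$, via Lemma \ref{LemaStout} or a direct variance bound.
\item Terms involving double sums $\sum_i\sum_l\delta_i\delta_l\omega_{k_0}(\mathcal{X}_i,\mathcal{X}_l)\eta^C_{lj}\varepsilon_i$ are controlled by Lemma \ref{lemma4}, applied with $a_{ij}=\delta_j\omega_{k_0}(\mathcal X_i,\mathcal X_j)$. Lemma \ref{lemma2} (for $k_0$) gives $a_n=1/k_0$, and $b_n=1$. This yields $\max_i|\sum_l a_{il}V_l|=O(k_0^{-1/2}\log n)$ a.s. The conditions of Lemma \ref{lemma4} hold thanks to $k_0\ge n^{2/r+b}/\log^2 n$. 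The resulting products are $O(n\,k_0^{-1}\log^2 n)$, which is $o(\sqrt n)$ by $\sqrt n\log^2 n/k_0\to0$.
\end{itemize}
The main obstacle is this step. The weights are random through $H_{k_0,\chi}$ and also through the $\delta$'s in the denominator, so Lemma \ref{lemma4} must be applied conditionally, using its "random sequence" clause. One also needs $\sum_l\delta_l\omega_{k_0}=1$ with weights bounded by $O(1/k_0)$; this needs (A7), which ensures that a positive fraction of the $k_0$ neighbours are observed.

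\textbf{Step 3: limits.} The leading term $n^{-1/2}\sum_i\delta_i\eta^C_i\varepsilon_i$ is a sum of i.i.d.\ zero-mean vectors. By MAR, $\mathbb{E}(\varepsilon|X,\mathcal X,\delta)=0$, and the covariance is $\mathbb{E}(\delta\eta^C\eta^{C\top}\varepsilon^2)=\pmb V_C$, since $\mathbb{E}(\delta|X,\mathcal X)=\Delta$. The CLT combined with Slutsky and Lemma \ref{lemma3.5} gives (\ref{betaC-norm}).

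For (\ref{betaC-log}), apply Lemma \ref{LemaStout} to the scalar variables $V_i=e_j^{\top}\pmb{\Sigma}_0^{-1}\delta_i\eta^C_i\varepsilon_i$. These have variance $\sigma_{jj}>0$ and moments of order $2+\delta$ by (A6)(i). Step 2 showed the remainders are $o(\sqrt n)$; to transfer the LIL they must be upgraded to a.s.\ $o(\sqrt{n\log\log n})$. This upgrade holds because the bounds in Step 2 are almost sure. The same holds for replacing $n^{-1}\widetilde{\mathbf X}_0^{\top}\pmb{\delta}\widetilde{\mathbf X}_0$ by $\pmb\Sigma_0$, which is valid since its error is $o(1)$ a.s.
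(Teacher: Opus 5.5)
Your proposal follows essentially the same route as the paper, whose proof simply defers to Theorem~1(i,ii) of \cite{linav20} with $g_j,\eta_{ij}$ replaced by $g^C_j,\eta^C_{ij}$ and with Lemmas~\ref{lemma1.5} and \ref{lemma3.5} in place of Lemmas~\ref{lemma1} and \ref{lemma3}. Your reading of $\mathbf{W}_0$ as smoothing through $\delta_l\omega_{k_0}$, and of Lemma~\ref{lemma1.5} as centring at $g^C_j$, is the right interpretation. One small repair is needed: terms such as $\sum_i\delta_i\eta^C_{ij}\,\mathrm{bias}_i$, whose weights depend on the whole sample, should be bounded with Lemma~\ref{lemma4} conditionally on $(\mathcal{X}_l,\delta_l)_{l}$, not with Lemma~\ref{LemaStout} or a variance bound. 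Lemma~\ref{lemma4} gives almost surely $O(\sqrt{n}\log n)$ times the rate, and (\ref{betaC-log}) requires this almost sure control of the remainders.
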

\noindent \emph{Proof.} Taking into account that $\eta_{ij}^C$ and $g_j^C$ ($i=1,\ldots,n$; $j=1,\ldots,p$) satisfy the same assumptions as $\eta_{ij}$ and $g_j$, respectively, and using Lemma \ref{lemma1.5} and Lemma \ref{lemma3.5} instead of Lemma \ref{lemma1} and Lemma \ref{lemma3}, respectively, the proof of this lemma is similar to that of Theorem 1(i, ii) in \cite{linav20}. Therefore we omit it here. $\Box$

\subsection{Proof of Theorem \ref{theorem1}}\label{app2}
Before presenting the proof of Theorem \ref{theorem1}, we introduce some notation that will be used in this and other proofs. Let us denote
$$\widehat{f}^C(\chi)=\sum_{i=1}^{n}\delta_i\omega_{k_0}(\chi,\mathcal{X}_i)Y_i, \	
	\widehat{g}^C_j(\chi)=\sum_{i=1}^{n}\delta_i\omega_{k_0}(\chi,\mathcal{X}_i)X_{ij}, \ \widehat{\pmb{g}}^C(\chi)=(\widehat{g}^C_1(\chi),\ldots,\widehat{g}^C_p(\chi))^{\top},$$
$$f(\chi)=\mathbb{E}(Y |\mathcal{X}=\chi ), \	
	\widehat{g}_j(\chi)=\sum_{i=1}^{n}\omega_{k_1}(\chi,\mathcal{X}_i)X_{ij}, \ \widehat{\pmb{g}}(\chi)=(\widehat{g}_1(\chi),\ldots,\widehat{g}_p(\chi))^{\top}, \ \pmb{g}(\chi)=(g_1(\chi),\ldots,g_p(\chi))^{\top},$$
$$\widetilde{\mathbf{X}}_1=(\widetilde{X}_1,\ldots,\widetilde{X}_n)^{\top}, \ \pmb{\eta}=(\eta_1,\ldots,\eta_n)^{\top}, \ \widetilde{\mathbf{G}}_1=(\widetilde{G}_{ij}), \mbox{ where } \widetilde{G}_{ij}=g_j(\mathcal{X}_i)-\sum_{s=1}^n\omega_{k_1}(\mathcal{X}_i,\mathcal{X}_s)g_j(\mathcal{X}_s),$$
\begin{equation}
 \ Y_i^{I\ast}=\delta_iY_i+(1-\delta_i)\left(X_i^{\top}\pmb{\beta}+m(\mathcal{X}_i)  \right) \mbox{ and } \widehat{m}^{C\ast}(\mathcal{X}_i) =\widehat{f}^C(\mathcal{X}_i)-\widehat{\pmb{g}}^C(\mathcal{X}_i)^{\top}\pmb{\beta},
\end{equation}

\noindent \textit{Proof of Theorem \ref{theorem1}(i).} We have that

\begin{equation}
n^{1/2}(\widehat{\pmb{\beta}}^{I}-\pmb{\beta})=\left(n^{-1}\widetilde{\mathbf{X}}_1^{\top}\widetilde{\mathbf{X}}_1\right)^{-1} n^{-1/2} A_n, \label{betaI-beta}
\end{equation}
where 
$$A_n=\widetilde{\mathbf{X}}_1^{\top}\left(\widetilde{\mathbf{Y^I}}_1-\widetilde{\mathbf{X}}_1\pmb{\beta}\right).$$
Then we can write 
\begin{equation}
A_n=\sum_{r=1}^7A_{nr}, \label{A}
\end{equation}
where
$$A_{n1}=\sum_{i=1}^n\widetilde{X}_i(1-\delta_i)(\widehat{m}^{C\ast}(\mathcal{X}_i)-m(\mathcal{X}_i)),
$$

$$A_{n2}=-\sum_{i=1}^n\widetilde{X}_i\left(\sum_{j=1}^n\omega_{k_1}(\mathcal{X}_i,\mathcal{X}_j) Y_j^{I\ast}-f(\mathcal{X}_i)   \right),
$$

$$A_{n3}=-\sum_{i=1}^n\widetilde{X}_i\left(\sum_{j=1}^n\omega_{k_1}(\mathcal{X}_i,\mathcal{X}_j) (1-\delta_j)X_j^{\top}(\widehat{\pmb{\beta}}^C-\pmb{\beta})\right), 
$$

$$A_{n4}=-\sum_{i=1}^n\widetilde{X}_i\left(\sum_{j=1}^n\omega_{k_1}(\mathcal{X}_i,\mathcal{X}_j) (1-\delta_j)(\widehat{m}^C(\mathcal{X}_j)-m(\mathcal{X}_j))\right),
$$

$$A_{n5}=\sum_{i=1}^n\widetilde{X}_i\left(\widehat{\pmb{g}}(\mathcal{X}_i)-\pmb{g}(\mathcal{X}_i)\right)^{\top}\pmb{\beta},
$$

$$A_{n6}=\sum_{i=1}^n\widetilde{X}_i(1-\delta_i)(X_i-\widehat{\pmb{g}}^C(\mathcal{X}_i))^{\top}(\widehat{\pmb{\beta}}^C-\pmb{\beta}) 
$$
and
$$A_{n7}=\sum_{i=1}^n\widetilde{X}_i\delta_i\varepsilon_i.
$$
In addition, taking into account that
\begin{equation}
\widetilde{\mathbf{X}}_1=\widetilde{\mathbf{G}}_1+\pmb{\eta}-\mathbf{W}_1\pmb{\eta}, \label{Xtilde-3}
\end{equation}
we have that each component $A_{nr}$ ($r=1,\ldots,7$) can be decomposed into three summands $A_{nr}^{(u)}$ ($u=1,2,3$). If we add the subscript $j$ ($1 \leq j \leq p$) to indicate the $j$-the component of each term $A_{nr}$ and $A_{nr}^{(u)}$, we have that   
\begin{equation}
A_{nrj}=\sum_{i=1}^n\widetilde{G}_{ij}c_{ri} + \sum_{i=1}^n\eta_{ij}c_{ri} + \sum_{i=1}^n (\sum_{s=1}^n \omega_{k_1}(\mathcal{X}_i,\mathcal{X}_s)\eta_{sj})c_{ri}=A_{nrj}^{(1)}+A_{nrj}^{(2)}+A_{nrj}^{(3)}.\label{A1-7}
\end{equation}

Firstly, we present some results that will play a main role in the proofs of this theorem. From Lemma \ref{lemma1} we have that
\begin{equation}
\max_{1\leq i \leq n}|\widetilde{G}_{ij}|= O\left(\phi^{-1}\left(\frac{k_1}{n}\right) ^{\alpha}+\sqrt{\frac{\psi_{S_{\mathcal{F}}}(\frac{\log n}{n})}{k_1}}\right) \ a.s. \ (j=1,\ldots,p), \label{O_gtilde}
\end{equation}
\begin{equation}
\sup_{\chi\in S_{\mathcal{F}}}|\widehat{g}_{j}(\chi)-g_j(\chi)|= O\left(\phi^{-1}\left(\frac{k_1}{n}\right) ^{\alpha}+\sqrt{\frac{\psi_{S_{\mathcal{F}}}(\frac{\log n}{n})}{k_1}}\right) \ a.s. \ (j=1,\ldots,p) \label{O_ghat}
\end{equation}
and
\begin{equation}
\sup_{\chi\in S_{\mathcal{F}}}|\sum_{j=1}^n\omega_{k_1}(\chi,\mathcal{X}_j) Y_j^{I\ast}-f(\chi) |= O\left(\phi^{-1}\left(\frac{k_1}{n}\right) ^{\alpha}+\sqrt{\frac{\psi_{S_{\mathcal{F}}}(\frac{\log n}{n})}{k_1}}\right) \ a.s. \  \label{O_ftildeast}
\end{equation}
hold (note that in (\ref{O_ftildeast}) we have used the fact that, under the MAR condition, $\mathbb{E}\left(Y_i^{I\ast}|\mathcal{X}_i\right)=f(\mathcal{X}_i)$ holds). In addition, taking into account that $\widehat{m}^{C\ast}(\chi)=\sum_{j=1}^{n}\delta_j\omega_{k_0}(\chi,\mathcal{X}_j)(m(\mathcal{X}_j)+\varepsilon_j)$, Lemma \ref{lemma1.5} gives
\begin{equation}
\sup_{\chi\in S_{\mathcal{F}}}
|\widehat{m}^{C\ast}(\chi)-m(\chi)|=O\left(\phi^{-1}\left(\frac{k_0}{n}\right) ^{\alpha}+\sqrt{\frac{\psi_{S_{\mathcal{F}}}(\frac{\log n}{n})}{k_0}}\right) \ a.s. \label{O_mtildeCast}
\end{equation}
Now, from (\ref{O_mtildeCast}), Lemma \ref{lemma1.5} and the result (\ref{betaC-log}) in Lemma \ref{lemma3.7}, we can establish that
\begin{eqnarray}
\sup_{\chi\in S_{\mathcal{F}}}|\widehat{m}^C(\chi)-m(\chi)| &\leq& \sup_{\chi\in S_{\mathcal{F}}}|\widehat{m}^{C\ast}(\chi)-m(\chi)| + \sup_{\chi\in S_{\mathcal{F}}}|\widehat{\pmb{g}}^C(\chi)^{\top}(\widehat{\pmb{\beta}}^C-\pmb{\beta})| \nonumber\\
&=&O\left(\phi^{-1}\left(\frac{k_0}{n}\right) ^{\alpha}+\sqrt{\frac{\psi_{S_{\mathcal{F}}}(\frac{\log n}{n})}{k_0}}\right) \ a.s. \label{O_mtilde}
\end{eqnarray}
Finally, considering $a_{is}=\omega_{k_1}(\mathcal{X}_i,\mathcal{X}_s)$, $a_n=1/k_1$ (see Lemma \ref{lemma2}) and $b_n=1$ in Lemma \ref{lemma4}, we obtain that
\begin{equation}
\max_{1\leq i \leq n}\sum_{s=1}^n\omega_{k_1}(\mathcal{X}_i,\mathcal{X}_s)\eta_{s,j}=O\left(\frac{\log n}{\sqrt{k_1}}\right) \ a.s. \label{O-eta}
\end{equation}

Next we focus on the terms $A_{nr}$ ($r=1,2,4,5$), which can be studied in a similar way. On the one hand, from (\ref{O_mtildeCast}), (\ref{O_ftildeast}), (\ref{O_mtilde}) and (\ref{O_ghat}) we have that
\begin{equation}
\max_{1\leq i \leq n}|c_{1i}|=O\left(\phi^{-1}\left(\frac{k_0}{n}\right) ^{\alpha}+\sqrt{\frac{\psi_{S_{\mathcal{F}}}(\frac{\log n}{n})}{k_0}}\right)\ a.s, \label{c1}
\end{equation}
\begin{equation}
\max_{1\leq i \leq n}|c_{2i}|=O\left(\phi^{-1}\left(\frac{k_1}{n}\right) ^{\alpha}+\sqrt{\frac{\psi_{S_{\mathcal{F}}}(\frac{\log n}{n})}{k_1}}\right) \ a.s, \label{c2}
\end{equation}
\begin{equation}
\max_{1\leq i \leq n}|c_{4i}|=O\left(\phi^{-1}\left(\frac{k_0}{n}\right) ^{\alpha}+\sqrt{\frac{\psi_{S_{\mathcal{F}}}(\frac{\log n}{n})}{k_0}}\right) \ a.s \label{c4}
\end{equation}
and
\begin{equation}
\max_{1\leq i \leq n}|c_{5i}|=O\left(\phi^{-1}\left(\frac{k_1}{n}\right) ^{\alpha}+\sqrt{\frac{\psi_{S_{\mathcal{F}}}(\frac{\log n}{n})}{k_1}}\right) \ a.s, \label{c5}
\end{equation}
respectively. On the other hand, combining Abel's inequality with both (\ref{O_gtilde}) and (\ref{c1}), we obtain that
\begin{center}
	\begin{eqnarray}
\left |A_{n1j}^{(1)}\right | &\leq& n \max_{1 \leq i \leq n}|\widetilde{G}_{ij}|\max_{1 \leq i \leq n}|c_{1i}|\nonumber
\\
&=&O\left(n\left(\phi^{-1}\left(\frac{k_1}{n}\right) ^{\alpha}+\sqrt{\frac{\psi_{S_{\mathcal{F}}}(\frac{\log n}{n})}{k_1}}\right)\left(\phi^{-1}\left(\frac{k_0}{n}\right) ^{\alpha}+\sqrt{\frac{\psi_{S_{\mathcal{F}}}(\frac{\log n}{n})}{k_0}}\right)\right)=o(\sqrt{n}) \ a.s, \label{A11}
\end{eqnarray}
\end{center}
In the same way, combining Abel's inequality with both (\ref{O_gtilde}) and each of the three results (\ref{c2})-(\ref{c5}), we obtain that
\begin{equation}
A_{n2j}^{(1)}=O\left(n\left(\phi^{-1}\left(\frac{k_1}{n}\right) ^{\alpha}+\sqrt{\frac{\psi_{S_{\mathcal{F}}}(\frac{\log n}{n})}{k_1}}\right)^2\right)=o(\sqrt{n}) \ a.s, \label{A21}
\end{equation}
\begin{equation}
A_{n4j}^{(1)}=O\left(n\left(\phi^{-1}\left(\frac{k_1}{n}\right) ^{\alpha}+\sqrt{\frac{\psi_{S_{\mathcal{F}}}(\frac{\log n}{n})}{k_1}}\right)\left(\phi^{-1}\left(\frac{k_0}{n}\right) ^{\alpha}+\sqrt{\frac{\psi_{S_{\mathcal{F}}}(\frac{\log n}{n})}{k_0}}\right)\right)=o(\sqrt{n}) \ a.s \label{A41}
\end{equation}
and
\begin{equation}
A_{n5j}^{(1)}=O\left(n\left(\phi^{-1}\left(\frac{k_1}{n}\right) ^{\alpha}+\sqrt{\frac{\psi_{S_{\mathcal{F}}}(\frac{\log n}{n})}{k_1}}\right)^2\right)=o(\sqrt{n}) \ a.s. \label{A51}
\end{equation}
respectively. In addition, considering $a_{is}=c_{1i}$ and $b_n=na_n$ in Lemma \ref{lemma4}, and taking (\ref{c1}) into account, we obtain that
\begin{equation}
A_{n1j}^{(2)}=O\left(n^{1/2}\log n\left(\phi^{-1}\left(\frac{k_0}{n}\right) ^{\alpha}+\sqrt{\frac{\psi_{S_{\mathcal{F}}}(\frac{\log n}{n})}{k_0}}\right)\right) = o(\sqrt{n}) \ a.s.  \label{A12}
\end{equation}
In the same way, taking (\ref{c2})-(\ref{c5}) into account, we have that
\begin{equation}
A_{n2j}^{(2)}=O\left(n^{1/2}\log n\left(\phi^{-1}\left(\frac{k_1}{n}\right) ^{\alpha}+\sqrt{\frac{\psi_{S_{\mathcal{F}}}(\frac{\log n}{n})}{k_1}}\right)\right) = o(\sqrt{n}) \ a.s.,  \label{A22}
\end{equation}
\begin{equation}
A_{n4j}^{(2)}=O\left(n^{1/2}\log n\left(\phi^{-1}\left(\frac{k_0}{n}\right) ^{\alpha}+\sqrt{\frac{\psi_{S_{\mathcal{F}}}(\frac{\log n}{n})}{k_0}}\right)\right)= o(\sqrt{n}) \ a.s.  \label{A42}
\end{equation}
and
\begin{equation}
A_{n5j}^{(2)}=O\left(n^{1/2}\log n\left(\phi^{-1}\left(\frac{k_1}{n}\right) ^{\alpha}+\sqrt{\frac{\psi_{S_{\mathcal{F}}}(\frac{\log n}{n})}{k_1}}\right)\right) = o(\sqrt{n}) \ a.s.,  \label{A52}
\end{equation}
respectively. 
Finally, taking (\ref{O-eta}) and (\ref{c1})-(\ref{c5}) into account, Abel's inequality gives
\begin{equation}
A_{n1j}^{(3)}=O\left(n \frac{\log n}{\sqrt{k_1}} \left(\phi^{-1}\left(\frac{k_0}{n}\right) ^{\alpha}+\sqrt{\frac{\psi_{S_{\mathcal{F}}}(\frac{\log n}{n})}{k_0}}\right)\right) = o(\sqrt{n}) \ a.s.,  \label{A13}
\end{equation}
\begin{equation}
A_{n2j}^{(3)}=O\left(n \frac{\log n}{\sqrt{k_1}}\left(\phi^{-1}\left(\frac{k_1}{n}\right) ^{\alpha}+\sqrt{\frac{\psi_{S_{\mathcal{F}}}(\frac{\log n}{n})}{k_1}}\right)\right) = o(\sqrt{n}) \ a.s.,  \label{A23}
\end{equation}
\begin{equation}
A_{n4j}^{(3)}=O\left(n \frac{\log n}{\sqrt{k_1}}\left(\phi^{-1}\left(\frac{k_0}{n}\right) ^{\alpha}+\sqrt{\frac{\psi_{S_{\mathcal{F}}}(\frac{\log n}{n})}{k_0}}\right)\right) = o(\sqrt{n}) \ a.s.  \label{A43}
\end{equation}
and
\begin{equation}
A_{n5j}^{(3)}=O\left(n \frac{\log n}{\sqrt{k_1}}\left(\phi^{-1}\left(\frac{k_1}{n}\right) ^{\alpha}+\sqrt{\frac{\psi_{S_{\mathcal{F}}}(\frac{\log n}{n})}{k_1}}\right)\right) = o(\sqrt{n}) \ a.s. \label{A53}
\end{equation}

Next we study the term $A_{n3}$. As in the previous proofs, we will obtain the a.s. orders of $A_{n3j}^{(u)}$ ($u=1,2,3$). To do this, we first need the order of
\begin{equation}
c_{i3}=-\sum_{s=1}^n\omega_{k_1}(\mathcal{X}_i,\mathcal{X}_s) (1-\delta_s)X_s^{\top}(\widehat{\pmb{\beta}}^C-\pmb{\beta}).\nonumber
\end{equation}
On the one hand, from the facts that $g_l$ ($1 \leq l \leq p$) is bounded and
\begin{equation}
\sum_{s=1}^n|\omega_{k_1}(\mathcal{X}_i,\mathcal{X}_s) (1-\delta_s) |\leq 1, \label{sum-w}
\end{equation}
we obtain that 
\begin{equation}
\max_{1\leq i \leq n}\left|\sum_{s=1}^n\omega_{k_1}(\mathcal{X}_i,\mathcal{X}_s) (1-\delta_s)g_l(\mathcal{X}_s)\right|=O(1).
\end{equation}
On the other hand, by arguments similar to those used to obtain (\ref{O-eta}), we have that 
\begin{equation}
\max_{1\leq i \leq n}\left|\sum_{s=1}^n\omega_{k_1}(\mathcal{X}_i,\mathcal{X}_s) (1-\delta_s)\eta_{sl}\right|=O\left(\frac{\log n}{\sqrt{k_1}}\right) \ a.s.
\end{equation}
These two results, together with Lemma \ref{lemma3.7} (see (\ref{betaC-log})) and the fact that $\log^2n=O(k_1)$, give
\begin{equation}
\max_{1\leq i \leq n}|c_{i3}|= O\left(\sqrt{\frac{\log\log n}{n}}\right) \ a.s. \label{c3}
\end{equation}
Using similar arguments to those used in the study of $A_{nrj}^{(u)}$ ($r=1,2,4,5; \ u=1,2,3$), and taking (\ref{c3}) into account, we obtain that 
\begin{equation}
A_{n3j}^{(1)}=O\left(\sqrt{n\log\log n}\left(\phi^{-1}\left(\frac{k_1}{n}\right) ^{\alpha}+\sqrt{\frac{\psi_{S_{\mathcal{F}}}(\frac{\log n}{n})}{k_1}}\right)\right)=o(\sqrt{n}) \ a.s, \label{A31}
\end{equation}

\begin{equation}
A_{n3j}^{(2)}=O\left(\log n \sqrt{\log\log n}\right)=o(\sqrt{n}) \ a.s. \label{A32}
\end{equation}
and
\begin{equation}
A_{n3j}^{(3)}=O\left(n^{1/2}k_1^{-1/2}\log n \sqrt{\log\log n}\right)=o(\sqrt{n}) \ a.s. \label{A33}
\end{equation}
Note that from (\ref{A1-7}), (\ref{A11})-(\ref{A53}) and (\ref{A31})-(\ref{A33}) it follows that
\begin{equation}
A_{n1}+A_{n2}+A_{n3}+A_{n4}+A_{n5}=o(\sqrt{n}) \ a.s. \label{sum_A12345}
\end{equation}

Now we focus on $A_{n6}$. We have that
\begin{equation}
A_{n6}=A_{n6}^{(1\ast)}+A_{n6}^{(2\ast)}+A_{n6}^{(3\ast)}+A_{n4}^{(4\ast)}, \label{sum_A6}
\end{equation}
where
$$
A_{n6}^{(1\ast)}=\sum_{i=1}^n({X}_i-\pmb{g}(\mathcal{X}_i))(1-\delta_i)(X_i-\widehat{\pmb{g}}^C(\mathcal{X}_i))^{\top}(\widehat{\pmb{\beta}}^C-\pmb{\beta}), 
$$

$$
A_{n6}^{(2\ast)}=\sum_{i=1}^n(\pmb{g}(\mathcal{X}_i)-\widehat{\pmb{g}}(\mathcal{X}_i))(1-\delta_i)X_i^{\top}(\widehat{\pmb{\beta}}^C-\pmb{\beta}), 
$$

$$
A_{n6}^{(3\ast)}=\sum_{i=1}^n(\pmb{g}(\mathcal{X}_i)-\widehat{\pmb{g}}(\mathcal{X}_i))(1-\delta_i)(\widehat{m}^C(\mathcal{X}_i)-m(\mathcal{X}_i))
$$
and
$$
A_{n6}^{(4\ast)}=-\sum_{i=1}^n(\pmb{g}(\mathcal{X}_i)-\widehat{\pmb{g}}(\mathcal{X}_i))(1-\delta_i)(\widehat{m}^{C\ast}(\mathcal{X}_i)-m(\mathcal{X}_i)).
$$
In order to obtain the asymptotic normality of the estimator $\widehat{\pmb{\beta}}^C$, in Lemma \ref{lemma3.7} (see (\ref{betaC-norm})) it was proven that 
\begin{equation}
\sqrt{n}(\widehat{\pmb{\beta}}^C-\pmb{\beta})=\pmb{\Sigma}_{0}^{-1}n^{-1/2}\left(\sum_{i=1}^{n}\eta_i^{C}\delta_i\varepsilon_i \right) + o(1) \ a.s \label{desc-betaC}
\end{equation}
By combining (\ref{desc-betaC}) and the strong law of large numbers, we get
\begin{equation}
A_{n6}^{(1\ast)}=\mathbb{E}((1-\Delta(X_1,\mathcal{X}_{1}))\eta_1\eta_1^{C {\top}})\pmb{\Sigma}_{0}^{-1}\sum_{i=1}^{n}\eta_i^{C}\delta_i\varepsilon_i  + o(\sqrt{n}) \ a.s. \label{A61ast}
\end{equation}
From (\ref{O_ghat}) and Lemma \ref{lemma3.7} (see (\ref{betaC-log})), together with the strong law of large numbers, we get
\begin{equation}
A_{n6}^{(2\ast)}= O\left(\phi^{-1}\left(\frac{k_1}{n}\right) ^{\alpha}+\sqrt{\frac{\psi_{S_{\mathcal{F}}}(\frac{\log n}{n})}{k_1}}\right)O\left(n^{-1/2}(\log\log n)^{1/2}\right)O(n)=o(\sqrt{n}) \ a.s. \label{A62ast}
\end{equation}
In addition, Abel's inequality togeher with (\ref{O_ghat}) and (\ref{O_mtilde}) gives
\begin{equation}
A_{n6}^{(3\ast)}= nO\left(\phi^{-1}\left(\frac{k_1}{n}\right) ^{\alpha}+\sqrt{\frac{\psi_{S_{\mathcal{F}}}(\frac{\log n}{n})}{k_1}}\right)\left(\phi^{-1}\left(\frac{k_0}{n}\right) ^{\alpha}+\sqrt{\frac{\psi_{S_{\mathcal{F}}}(\frac{\log n}{n})}{k_0}}\right)=o(\sqrt{n}) \ a.s., \label{A63ast}
\end{equation}
while if one uses (\ref{O_mtildeCast}) instead of (\ref{O_mtilde}), it is obtained that
\begin{equation}
A_{n6}^{(4\ast)}= nO\left(\phi^{-1}\left(\frac{k_1}{n}\right) ^{\alpha}+\sqrt{\frac{\psi_{S_{\mathcal{F}}}(\frac{\log n}{n})}{k_1}}\right)\left(\phi^{-1}\left(\frac{k_0}{n}\right) ^{\alpha}+\sqrt{\frac{\psi_{S_{\mathcal{F}}}(\frac{\log n}{n})}{k_0}}\right)=o(\sqrt{n}) \ a.s. \label{A64ast}
\end{equation}

Finally, we study the term $A_{n7}$. From (\ref{A1-7}), we have that each of its $j$-th components, $A_{n7j}$ ($j=1,\ldots,p$), can be decomposed as
\begin{equation}
A_{n7j}=\sum_{i=1}^n\widetilde{G}_{ij}\delta_i \varepsilon_i + \sum_{i=1}^n\eta_{ij}\delta_i \varepsilon_i + \sum_{i=1}^n (\sum_{s=1}^n \omega_{k_1}(\mathcal{X}_i,\mathcal{X}_s)\eta_{sj})\delta_i \varepsilon_i=A_{n7j}^{(1)}+A_{n7j}^{(2)}+A_{n7j}^{(3)}.\label{A7}
\end{equation}
On the one hand, considering $a_{ij}=\widetilde{G}_{ij}$ and $b_n=na_n$ in Lemma \ref{lemma4}, and taking (\ref{O_gtilde}) into account, we obtain that
\begin{equation}
A_{n7j}^{(1)}=O\left(n^{1/2}\log n \left(\phi^{-1}\left(\frac{k_1}{n}\right) ^{\alpha}+\sqrt{\frac{\psi_{S_{\mathcal{F}}}(\frac{\log n}{n})}{k_1}}\right)\right)=o(\sqrt{n}) \ a.s.  \label{A71}
\end{equation}
On the other hand, considering $a_{ij}=\sum_{s=1}^n \omega_{k_1}(\mathcal{X}_i,\mathcal{X}_s)\eta_{sj}$ and $b_n=na_n$ in Lemma \ref{lemma4}, and taking (\ref{O-eta}) into account, we obtain that
\begin{equation}
A_{n7j}^{(3)}=O\left(n^{1/2}\log n \frac{\log n}{\sqrt{k_1}} \right)=o(\sqrt{n}) \ a.s.\label{A73}
\end{equation}
Now, from (\ref{sum_A12345}), (\ref{sum_A6}) and (\ref{A61ast})-(\ref{A73}), we have that
\begin{equation}
A_n=\sum_{i=1}^n\eta_{i}\delta_i \varepsilon_i + \mathbb{E}((1-\Delta(X_1,\mathcal{X}_{1}))\eta_1\eta_1^{C {\top}})\pmb{\Sigma}_{0}^{-1}\sum_{i=1}^{n}\eta_i^{C}\delta_i\varepsilon_i + o(\sqrt{n}) \ a.s. \label{An}
\end{equation}
Finally, 
(\ref{betaI-beta}) and (\ref{An}), together with both Lemma \ref{lemma3} and a central limit theorem, prove Theorem \ref{theorem1}(i). $\Box$

\noindent \textit{Proof of Theorem \ref{theorem1}(ii).} From (\ref{betaI-beta}), (\ref{An}) and Lemma \ref{lemma3}, we have that
\begin{equation}
\widehat{\pmb{\beta}}^I - \pmb{\beta}=\pmb{\Sigma}_{1}^{-1} n^{-1}\sum_{i=1}^n\eta_{i}\delta_i \varepsilon_i + \pmb{\Gamma}n^{-1}\sum_{i=1}^{n}\eta_i^{C}\delta_i\varepsilon_i + o(n^{-1/2}) =B_{n1}+B_{n2} +o(n^{-1/2})\ a.s., \label{log-it}
\end{equation}
where
$$\pmb{\Gamma}=\pmb{\Sigma}_{1}^{-1}\mathbb{E}((1-\Delta(X_1,\mathcal{X}_{1}))\eta_1\eta_1^{C {\top}})\pmb{\Sigma}_{0}^{-1}.$$
Firstly, we obtain two auxiliary results that will play a main role in the proof of this Theorem. Let $\pmb{\Sigma}_{1j}^{-1}$ and $\pmb{\Gamma}_{j}$ denote the $j$-th rows of the matrices $\pmb{\Sigma}_{1}^{-1}$ and $\pmb{\Gamma}$, respectively. Considering $V_i=\pmb{\Sigma}_{1j}^{-1\top}\eta_{i}\delta_i \varepsilon_i/Var(\pmb{\Sigma}_{1j}^{-1\top}\eta_{i}\delta_i \varepsilon_i)^{1/2}$ in Lemma \ref{LemaStout} 

we obtain that
\begin{eqnarray}
	\limsup_{n\rightarrow \infty}\left(\frac{n}{2\log \log n}\right)^{1/2}\left|n^{-1}\sum_{i=1}^n  \pmb{\Sigma}_{1j}^{-1\top}\eta_{i}\delta_i \varepsilon_i\right|&=&\limsup_{n\rightarrow \infty}\left(\frac{1}{2n\log \log n}\right)^{1/2}\left|\sum_{i=1}^n  \pmb{\Sigma}_{1j}^{-1\top}\eta_{i}\delta_i \varepsilon_i\right| \nonumber \\ &=&	Var(\pmb{\Sigma}_{1j}^{-1\top}\eta_{1}\delta_1 \varepsilon_1)^{1/2} \ a.s. \label{LL1}
\end{eqnarray}
In the same way, if we consider $V_i=\pmb{\Gamma}_{j}^{\top}\eta_{i}\delta_i^C \varepsilon_i/Var(\pmb{\Gamma}_{j}^{\top}\eta_{i}^C\delta_i\varepsilon_i)^{1/2}$ in Lemma \ref{LemaStout}, we obtain
\begin{equation}
	\limsup_{n\rightarrow \infty}\left(\frac{n}{2\log \log n}\right)^{1/2}\left|n^{-1}\sum_{i=1}^n  \pmb{\Gamma}_{j}^{\top}\eta_{i}^C\delta_i \varepsilon_i\right|=Var(\pmb{\Gamma}_{j}^{\top}\eta_{1}^C\delta_1 \varepsilon_1)^{1/2} \ a.s. \label{LL2}
\end{equation}
Next, we focus on the proof of Theorem \ref{theorem1}(ii). We have that  
\begin{eqnarray}
\widehat{m}^I(\chi)-m(\chi)&=&
\left(\sum_{j=1}^n\omega_{k_1}(\chi,\mathcal{X}_j) Y_j^{I\ast}-f(\chi)   \right) + \left(\sum_{j=1}^n\omega_{k_1}(\chi,\mathcal{X}_j) (1-\delta_j)X_j^{\top}(\widehat{\pmb{\beta}}^C-\pmb{\beta})\right) \nonumber \\
&+& \left(\sum_{j=1}^n\omega_{k_1}(\chi,\mathcal{X}_j) (1-\delta_j)(\widehat{m}^C(\mathcal{X}_j)-m(\mathcal{X}_j))\right) \nonumber\\
&-& \left(\widehat{\pmb{g}}(\chi)-\pmb{g}(\chi) \right)^{\top}\left(\widehat{\pmb{\beta}}^I-\pmb{\beta} \right) -\pmb{g}^{\top}\left( \widehat{\pmb{\beta}}^I-\pmb{\beta}\right)
- \left(\widehat{\pmb{g}}(\chi)-\pmb{g}(\chi) \right)^{\top}\pmb{\beta} \nonumber\\
&=& B_{n1}(\chi)+B_{n2}(\chi)+B_{n3}(\chi)+B_{n4}(\chi)+B_{n5}(\chi)+B_{n6}(\chi). \label{sum_Bn}
\end{eqnarray}
Note that $$B_{n1}(\chi)+B_{n2}(\chi)+B_{n3}(\chi)=\sum_{j=1}^n\omega_{k_1}(\chi,\mathcal{X}_j) Y_j^{I}-f(\chi).$$
From (\ref{O_ftildeast}) we have that
\begin{eqnarray} 
\sup_{\chi\in S_{\mathcal{F}}}|B_{n1}(\chi)|=O\left(\phi^{-1}\left(\frac{k_1}{n}\right) ^{\alpha}+\sqrt{\frac{\psi_{S_{\mathcal{F}}}(\frac{\log n}{n})}{k_1}}\right) \ a.s.,\label{B1}
\end{eqnarray}
while (\ref{c3}) gives
\begin{eqnarray}
\sup_{\chi\in S_{\mathcal{F}}}|B_{n2}(\chi)|=O\left(\sqrt{\frac{\log\log n}{n}}\right) \ a.s.\label{B2}
\end{eqnarray} 
From (\ref{O_mtilde}) together with (\ref{sum-w}), we have that
\begin{eqnarray}
\sup_{\chi\in S_{\mathcal{F}}}=|B_{n3}(\chi)|=O\left(\phi^{-1}\left(\frac{k_0}{n}\right) ^{\alpha}+\sqrt{\frac{\psi_{S_{\mathcal{F}}}(\frac{\log n}{n})}{k_0}}\right) \ a.s. \label{B3}
\end{eqnarray}
In addition, from (\ref{O_ghat}) and (\ref{log-it})-(\ref{LL2}) we have that
\begin{eqnarray}
\sup_{\chi\in S_{\mathcal{F}}}|B_{n4}(\chi)|=O\left(\phi^{-1}\left(\frac{k_1}{n}\right) ^{\alpha}+\sqrt{\frac{\psi_{S_{\mathcal{F}}}(\frac{\log n}{n})}{k_1}}\right) \ a.s.,\label{B4}
\end{eqnarray}
while (\ref{log-it})-(\ref{LL2}) give 
\begin{eqnarray}
\sup_{\chi\in S_{\mathcal{F}}}|B_{n5}(\chi)|=O\left(\sqrt{\frac{\log\log n}{n}}\right) \ a.s.\label{B5}
\end{eqnarray}
Finally, from (\ref{O_ghat}) we have that 
\begin{eqnarray}
\sup_{\chi\in S_{\mathcal{F}}}|B_{n6}(\chi)|=O\left(\phi^{-1}\left(\frac{k_1}{n}\right) ^{\alpha}+\sqrt{\frac{\psi_{S_{\mathcal{F}}}(\frac{\log n}{n})}{k_1}}\right) \ a.s. \label{B6}
\end{eqnarray}
The claimed result follows from (\ref{sum_Bn})-(\ref{B6}). $\Box$

\subsection{Proof of Theorem \ref{theorem2}}\label{app3}

\textit{Proof of Theorem \ref{theorem2}(i).} We can write

\begin{equation}
n^{1/2}(\widehat{\pmb{\beta}}^{R}-\pmb{\beta})=\left(n^{-1}\widetilde{\mathbf{X}}_1^{\top}\widetilde{\mathbf{X}}_1\right)^{-1} n^{-1/2} C_n, \label{betaR-beta}
\end{equation}
where 
$$C_n=\widetilde{\mathbf{X}}_1^{\top}\left(\widetilde{\mathbf{Y^R}}_1-\widetilde{\mathbf{X}}_1\pmb{\beta}\right).$$
Let us denote $Y_i^{R\ast}=X_i^{\top}\pmb{\beta}+m(\mathcal{X}_i)$ and let us remember the notation $\widehat{m}^{C\ast}(\mathcal{X}_i) =\widehat{f}^C(\mathcal{X}_i)-\widehat{\pmb{g}}^C(\mathcal{X}_i)^{\top}\pmb{\beta}$. We have that
\begin{equation}
C_n=\sum_{r=1}^6C_{nr}, \label{C}
\end{equation}
where
$$C_{n1}=\sum_{i=1}^n\widetilde{X}_i(\widehat{m}^{C\ast}(\mathcal{X}_i)-m(\mathcal{X}_i)),
$$
$$C_{n2}=-\sum_{i=1}^n\widetilde{X}_i\left(\sum_{j=1}^n\omega_{k_1}(\mathcal{X}_i,\mathcal{X}_j) Y_j^{R\ast}-f(\mathcal{X}_i)   \right),
$$
$$C_{n3}=-\sum_{i=1}^n\widetilde{X}_i\left(\sum_{j=1}^n\omega_{k_1}(\mathcal{X}_i,\mathcal{X}_j) X_j^{\top}(\widehat{\pmb{\beta}}^C-\pmb{\beta})\right), 
$$
$$C_{n4}=-\sum_{i=1}^n\widetilde{X}_i\left(\sum_{j=1}^n\omega_{k_1}(\mathcal{X}_i,\mathcal{X}_j)(\widehat{m}^C(\mathcal{X}_j)-m(\mathcal{X}_j))\right),
$$
$$C_{n5}=\sum_{i=1}^n\widetilde{X}_i\left(\widehat{\pmb{g}}(\mathcal{X}_i)-\pmb{g}(\mathcal{X}_i)\right)^{\top}\pmb{\beta}
$$
and
$$C_{n6}=\sum_{i=1}^n\widetilde{X}_i(X_i-\widehat{\pmb{g}}^C(\mathcal{X}_i))^{\top}(\widehat{\pmb{\beta}}^C-\pmb{\beta}).
$$
On the one hand, noting that the only difference between $C_{nr}$ and $A_{nr}$ ($r=1,3,4$) is the presence of the factor $1-\delta$ in $A_{nr}$, from similar arguments to those used to obtain that $A_{nr}=o(\sqrt{n}) \ a.s.$ (see (\ref{sum_A12345})) we have that $C_{nr}=o(\sqrt{n}) \ a.s.$ holds. On the other hand, because of $C_{n5}=A_{n5}$, (\ref{sum_A12345}) also gives $C_{n5}=o(\sqrt{n}) \ a.s.$ Focusing now on $C_{n2}$, its only difference with $A_{n2}$ is that in $C_{n2}$ the variable $Y_j^{R\ast}$ is used instead of $Y_j^{I\ast}$. Taking into account that, under the MAR condition, $\mathbb{E}\left(Y_i^{R\ast}|\mathcal{X}_i\right)=\mathbb{E}\left(Y_i^{I\ast}|\mathcal{X}_i\right)=f(\mathcal{X}_i)$ holds, similar arguments to those used to obtain that $A_{n2}=o(\sqrt{n}) \ a.s.$ (see (\ref{sum_A12345})) give that $C_{n2}=o(\sqrt{n}) \ a.s.$ holds. In short, we have that
\begin{equation}
C_{n1}+C_{n2}+C_{n3}+C_{n4}+C_{n5}=o(\sqrt{n}) \ a.s. \label{sum_C12345}
\end{equation}
Finally, we consider the term $C_{n6},$ which can be written as
\begin{equation}
C_{n6}=C_{n6}^{(1\ast)}+C_{n6}^{(2\ast)}+C_{n6}^{(3\ast)}+C_{n4}^{(4\ast)} \ a.s., \label{sum_C6}
\end{equation}
where
$$
C_{n6}^{(1\ast)}=\sum_{i=1}^n({X}_i-\pmb{g}(\mathcal{X}_i))(X_i-\widehat{\pmb{g}}^C(\mathcal{X}_i))^{\top}(\widehat{\pmb{\beta}}^C-\pmb{\beta}), 
$$

$$
C_{n6}^{(2\ast)}=\sum_{i=1}^n(\pmb{g}(\mathcal{X}_i)-\widehat{\pmb{g}}(\mathcal{X}_i))X_i^{\top}(\widehat{\pmb{\beta}}^C-\pmb{\beta}), 
$$

$$
C_{n6}^{(3\ast)}=\sum_{i=1}^n(\pmb{g}(\mathcal{X}_i)-\widehat{\pmb{g}}(\mathcal{X}_i))(\widehat{m}^C(\mathcal{X}_i)-m(\mathcal{X}_i))
$$
and
$$
C_{n6}^{(4\ast)}=-\sum_{i=1}^n(\pmb{g}(\mathcal{X}_i)-\widehat{\pmb{g}}(\mathcal{X}_i))(\widehat{m}^{C\ast}(\mathcal{X}_i)-m(\mathcal{X}_i)).
$$
Again, the only difference between $C_{n6}^{(r\ast)}$ and $A_{n6}^{(r\ast)}$ is the presence of the factor $1-\delta$ in $A_{n6}^{(r\ast)}$ ($r=1,2,3,4$). In the case of $A_{n6}^{(1\ast)}$, one can see in (\ref{A61ast}) that $1-\delta$ only influences by means of the term $1-\Delta(X_1,\mathcal{X}_{1}).$ Taking this fact into account, from similar arguments to those used to obtain (\ref{A61ast}), we have that
\begin{equation}
C_{n6}^{(1\ast)}=\pmb{\Sigma}_{3}\pmb{\Sigma}_{0}^{-1}\sum_{i=1}^{n}\eta_i^{C}\delta_i\varepsilon_i  + o(\sqrt{n}) \ a.s. \label{C61ast}
\end{equation}
In addition, it is easy to see that $1-\delta$ plays an irrelevant role in obtaining the order of $A_{n6}^{(r\ast)}$ ($r=2,3,4$); see (\ref{A62ast})-(\ref{A64ast}) for details. Therefore, $C_{n6}^{(2\ast)}+C_{n6}^{(3\ast)}+C_{n6}^{(4\ast)}=o(\sqrt{n}) \ a.s.$ From this fact together with (\ref{C})-(\ref{C61ast}) we have that
\begin{equation}
C_{n}=\pmb{\Sigma}_{3}\pmb{\Sigma}_{0}^{-1}\sum_{i=1}^{n}\eta_i^{C}\delta_i\varepsilon_i  + o(\sqrt{n}) \ a.s. \label{Cn-desc}
\end{equation}
Finally, (\ref{betaR-beta}) and (\ref{Cn-desc}), together with both Lemma \ref{lemma3} and a central limit theorem, prove Theorem \ref{theorem2}(i). $\Box$

\noindent \textit{Proof of Theorem \ref{theorem2}(ii).} It can be obtained using similar arguments to those considered to prove Theorem \ref{theorem1}(ii). Therefore we omit it here. $\Box$

\subsection{Proof of Theorem \ref{theorem3}}\label{app4}

\textit{Proof of Theorem \ref{theorem3}(i).} We have that

\begin{equation}
n^{1/2}(\widehat{\pmb{\beta}}^{IP}-\pmb{\beta})=\left(n^{-1}\widetilde{\mathbf{X}}_1^{\top}\widetilde{\mathbf{X}}_1\right)^{-1} n^{-1/2} D_n, \label{betaIP-beta}
\end{equation}
where 
$$D_n=\widetilde{\mathbf{X}}_1^{\top}\left(\widetilde{\mathbf{Y^{IP}}}_1-\widetilde{\mathbf{X}}_1\pmb{\beta}\right).$$
Let us denote
\begin{equation}
Y_i^{IP\ast}=\frac{\delta_i}{\widehat{\Delta}_1(\mathcal{X}_i)}Y_i+(1-\frac{\delta_i}{\widehat{\Delta}_1(\mathcal{X}_i)})\left(X_i^{\top}\pmb{\beta}+m(\mathcal{X}_i)  \right).
\end{equation}
Then, in a similar way as in the decomposition (\ref{A}), we can write 
\begin{equation}
D_n=\sum_{r=1}^7D_{nr}, \label{D}
\end{equation}
where
$$D_{n1}=\sum_{i=1}^n\widetilde{X}_i(1-\frac{\delta_i}{\widehat{\Delta}_1(\mathcal{X}_i)})(\widehat{m}^{C\ast}(\mathcal{X}_i)-m(\mathcal{X}_i)),
$$

$$D_{n2}=-\sum_{i=1}^n\widetilde{X}_i\left(\sum_{j=1}^n\omega_{k_1}(\mathcal{X}_i,\mathcal{X}_j) Y_j^{IP\ast}-f(\mathcal{X}_i)   \right),
$$

$$D_{n3}=-\sum_{i=1}^n\widetilde{X}_i\left(\sum_{j=1}^n\omega_{k_1}(\mathcal{X}_i,\mathcal{X}_j) (1-\frac{\delta_j}{\widehat{\Delta}_1(\mathcal{X}_j)})X_j^{\top}(\widehat{\pmb{\beta}}^C-\pmb{\beta})\right), 
$$

$$D_{n4}=-\sum_{i=1}^n\widetilde{X}_i\left(\sum_{j=1}^n\omega_{k_1}(\mathcal{X}_i,\mathcal{X}_j) (1-\frac{\delta_j}{\widehat{\Delta}_1(\mathcal{X}_j)})(\widehat{m}^C(\mathcal{X}_j)-m(\mathcal{X}_j))\right),
$$

$$D_{n5}=\sum_{i=1}^n\widetilde{X}_i\left(\widehat{\pmb{g}}(\mathcal{X}_i)-\pmb{g}(\mathcal{X}_i)\right)^{\top}\pmb{\beta},
$$

$$D_{n6}=\sum_{i=1}^n\widetilde{X}_i(1-\frac{\delta_i}{\widehat{\Delta}_1(\mathcal{X}_i)})(X_i-\widehat{\pmb{g}}^C(\mathcal{X}_i))^{\top}(\widehat{\pmb{\beta}}^C-\pmb{\beta}) 
$$
and
$$D_{n7}=\sum_{i=1}^n\widetilde{X}_i\frac{\delta_i}{\widehat{\Delta}_1(\mathcal{X}_i)}\varepsilon_i.
$$
Let $D'_{nr}$ ($r=1,\ldots,7$) denote the term $D_{nr}$ when $\widehat{\Delta}_1(\chi)$ is replaced by $\Delta_1(\chi),$ and let us write $D_{nr}=D'_{nr}+d_{nr}.$ Then, taking into account the fact that $\Delta_1(\chi)>C>0 \ \forall \chi \in S_{\mathcal{F}}$ (see assumption (A7)), from similar arguments to those used in the proof of Theorem \ref{theorem1}(i) (Section \ref{app2}) to obtain (\ref{An}) one obtains
\begin{equation}
\sum_{r=1}^7D'_{nr}=\sum_{i=1}^n\eta_{i}\frac{\delta_i \varepsilon_i}{\Delta_1(\mathcal{X}_i)} + \mathbb{E}((1-\frac{\delta_1}{\Delta_1(\mathcal{X}_1)})\eta_1\eta_1^{C {\top}})\pmb{\Sigma}_{0}^{-1}\sum_{i=1}^{n}\eta_i^{C}\delta_i\varepsilon_i + o(\sqrt{n}) \ a.s. \label{Dnprima}
\end{equation}
In addition, taking into account the almost sure uniform rate of convergence of $\widehat{\Delta}_1(\chi)$ (which can be obtained from Lemma \ref{lemma1} if $g_j(\chi),$ $X_{ij},$ $k_1$, and assumptions (A2)(iv), (A3) and (A5) are replaced by $\Delta_1(\chi),$ $\delta_i,$ $k_2$ and assumptions (A10), (A8) and (A9), respectively), one obtains that 
\begin{equation}
		\sup_{i\in \{1,\ldots,n\}}\left|\frac{\delta_i}{\widehat{\Delta}_1(\mathcal{X}_i)}-\frac{\delta_i}{\Delta_1(\mathcal{X}_i)}\right|= O\left(\phi^{-1}\left(\frac{k_2}{n}\right) ^{\alpha}+\sqrt{\frac{\psi_{S_{\mathcal{F}}}(\frac{\log n}{n})}{k_2}}\right) \ a.s. \nonumber
	\end{equation}
This fact together with similar arguments to those used in Section \ref{app2} to obtain orders and asymptotic representations for $A_{nr}$ ($r=1,\ldots,7$)	give
\begin{equation}
\sum_{r=1}^7d_{nr}=o(\sqrt{n}) \ a.s. \label{dn}
\end{equation}	
Finally, 
(\ref{betaIP-beta}), (\ref{D}), (\ref{Dnprima}) and (\ref{dn}), together with both Lemma \ref{lemma3} and a central limit theorem, prove Theorem \ref{theorem3}(i). $\Box$

\noindent \textit{Proof of Theorem \ref{theorem3}(ii).} It can be obtained using similar arguments to those considered to prove Theorem \ref{theorem1}(ii). Therefore we omit it here. $\Box$

\begin{thebibliography}{99}
	\bibitem[Aneiros et al.(2015)]{ane15} Aneiros G, Ferraty F, Vieu P (2015) Variable selection in partial linear regression with functional covariate. Statistics 49:1322--1347
	
	\bibitem[Aneiros et al.(2022)Aneiros, Horová, Hušková and Vieu]{anehhv22} Aneiros G, Horová I, Hušková M, Vieu P (2022) On functional data analysis and related topics. J Multivariate Anal 189:104861
	
	\bibitem[Aneiros-Pérez and Vieu(2006)]{anev06} Aneiros-Pérez G, Vieu P (2006) Semi-functional partial linear regression. Statist Probab Lett 76:1102--1110
	
	\bibitem[Aneiros-Pérez and Vieu(2008)]{anev08} Aneiros-Pérez G, Vieu P (2008) Nonparametric time series prediction: A semi-functional partial linear modeling. J Multivariate Anal 99:834--857
		
	\bibitem[Boente and Vahnovan(2017)]{boev17} Boente G, Vahnovan A (2017) Robust estimators in semi-functional partial linear regression models. J Multivariate Anal 154:59--84 


	\bibitem[Collomb(1979)]{collomb} Collomb G (1979) Estimation de la régression par la méthode des k points les plus proches: propriétés de convergence ponctuelle. C R Acad Sci Paris 289:245--247
	
	\bibitem[Crambes and Henchiri(2019)]{crah19}Crambes C, Henchiri Y (2018) Regression imputation in the functional linear model with missing values in the response. J Statist Plann Inference 201:103--119
	

	\bibitem[Ding et al.(2018)Ding, Lu, Zhang and Zhang]{din18} Ding H, Lu Z, Zhang J, Zhang R (2018) Semi-functional partial linear quantile regression. Statist Probab Lett 142:92--101
	
		\bibitem[Febrero-Bande et al.(2019)Febrero-Bande, Galeano and González-Manteiga]{febg19} Febrero-Bande M, Galeano P, González-Manteiga W (2019) Estimation, imputation and prediction for the functional linear model with scalar response with responses missing at random. Comput Statist Data Anal 131:91--103.
	
	
	\bibitem[Ferraty and Vieu(2006)]{ferv06} Ferraty F, Vieu P (2006) Nonparametric Functional Data Analysis: Theory and practice. Springer, New York
	
	
	
		\bibitem[Ferraty et al.(2013)Ferraty, Sued and Vieu]{fers13} Ferraty F, Sued M, Vieu P (2013) Mean estimation with data missing at random for functional covariables. Statistics 47:688--706.
	
	
		\bibitem[Horváth and Kokoszka(2012)]{hork12} Horváth L, Kokoszka P (2012)  Inference for Functional Data with Applications. Springer.

\bibitem[Hsing and Eubank(2015)]{hsie15} Hsing T, Eubank R (2015) Theoretical Foundations of Functional Data Analysis, With an Introduction to Linear Operators. Wiley.		
		
	\bibitem[Kara-Zaitri et al.(2017)Kara-Zaitri, Laksaci, Rachdi and Vieu]{kara_JMVA} Kara-Zaitri L, Laksaci A, Rachdi M, Vieu P (2017) Data-driven kNN estimation in nonparametric functional data analysis. J Multivariate Anal 153:176--188
	
	\bibitem[Kokoszka et al(2017)Kokoszka, Oja, Park and Sangalli]{koko17}Kokoszka P, Oja H, Park B, Sangalli L (2017). Special issue on functional data analysis. Econom Stat 1: 99--100.

\bibitem[Kokoszka and Reimherr(2017)]{kokr17} Kokoszka P, Reimherr M (2017) Introduction to Functional Data Analysis. CRC Press
	
	\bibitem[Kudraszow and Vieu(2013)]{Kud13} Kudraszow N, Vieu P (2013) Uniform consistency of $k$NN regressors for functional variables. Statist Probab Lett 83:1863--1870
	
	\bibitem[Ling et al.(2019)Ling, Kan, Vieu and Meng]{lin19} Ling N, Kan R, Vieu P, Meng S (2019) Semi-functional partially linear regression model with responses missing at random. Metrika 82:39--70 
	
	\bibitem[Ling et al.(2020)Ling, Aneiros and Vieu]{linav20} Ling N, Aneiros G, Vieu P (2020) kNN estimation in functional partial linear modeling. Statist Papers 61:423--444
	
	
	\bibitem[Little and Rubin(1987)]{litr87} Little RJA, Rubin DB (1987) Statistical Analysis with Missing Data. Wiley, New York
	
	\bibitem[Novo et al.(2019)Novo, Aneiros and Vieu]{novo_2019} Novo S, Aneiros G, Vieu P (2019) Automatic and location-adaptive estimation in functional single-index regression. J Nonparametr Stat 31:364--392
	
	\bibitem[Rachdi et al.(2021)Rachdi, Laksaci, Kaid, Benchiha and Al-Awadhi]{rac21} Rachdi M, Laksaci A, Kaid Z, Benchiha A, Al-Awadhi FA (2021) \textit{k}-Nearest neighbors local linear regression for functional and missing data at random. Statistica Neerlandica 75:42--65
	
	\bibitem[Ramsay and Silverman(2005)]{rams05} Ramsay J, Silverman B (2005) Functional Data Analysis. Springer, New York
	
	\bibitem[Shang(2014)]{sha14} Shang HL (2014) Bayesian bandwidth estimation for a semi-functional partial linear regression model with unknown error density. Comput Statist 29:829--848
	
	\bibitem[Stout(1974)]{stout74} Stout W (1974) Almost Sure Convergence. Academic Press, New York
		
 \bibitem[Vilar et al.(2012)Vilar, Cao and Aneiros]{vil12} Vilar JM, Cao R and Aneiros (2012) Forecasting next-day electricity demand and price using nonparametric functional methods. Int J Electr Power and Energy Syst 39:48--55

\bibitem[Wang(2026)]{wan26} Wang T (2026) Robust semi-functional censored regression. J Multivariate Anal 211:105491

		\bibitem[Wang et al.(2004)Wang, Linton and Hardle]{wan04} Wang Q, Linton O and Hardle H (2004) Semiparametric regression analysis with missing responses at random. J Amer Statist Assoc 99:334--345

	\bibitem[Wang and Sun(2007)]{wan07} Wang Q, Sun Z (2007) Estimation in partially linear models with missing responses at random. J Multivariate Anal 98:1470--1493
	
	
	\bibitem[Zhu et al.(2020)]{zhu20} Zhu H, Zhang R, Zhu G (2020) Estimation and inference in semi-functional partially linear measurement error models. J Syst Sci Complex 33:1179--1199
	
	
	\bibitem[Zhu and Zhao(2019)]{zhuz19} Zhu S, Zhao P (2019) Tests for the linear hypothesis in semi-functional partial linear regression models.	Metrika 82:125--148
	
\end{thebibliography}
\end{document}